\newtheorem{remark}{Remark}
\newtheorem{lemma}{Lemma}
\newtheorem{claim}{Claim}
\newtheorem{fact}{Fact}
\newtheorem{example}{Example}
\newtheorem{definition}{Definition}
\newtheorem{corollary}{Corollary}
\newtheorem{theorem}{Theorem}
\newtheorem{assumption}{Assumption}
\newcommand{\reals}{\mathbb{R}}
\newcommand{\cD}{\mathcal{D}}
\newcommand{\cN}{\mathcal{N}}
\newcommand{\cW}{\mathcal{W}}
\newcommand{\E}{\mathbb{E}}
\renewcommand{\Pr}{\mathbb{P}}
\newcommand{\pdv}[2]{\frac{\partial}{\partial #2} #1}
\newcommand{\thr}{\tau}
\newcommand{\stdw}{\sigma}
\newcommand{\stdt}{\gamma}
\newcommand{\discnt}{\rho}
\newcommand{\esh}[1]{\textcolor{red}{[Eshwar: #1]}}
\newcommand{\juba}[1]{\textcolor{green}{[Juba: #1]}}
\newcommand{\kri}[1]{\textcolor{purple}{[Krishna: #1]}}
\newcommand{\ar}[1]{\textcolor{blue}{[Aaron: #1]}}
\newcommand{\sampk}[1]{\textcolor{orange}{[Sampath: #1]}}
\renewcommand{\esh}[1]{}
\renewcommand{\juba}[1]{}
\renewcommand{\kri}[1]{}
\renewcommand{\ar}[1]{}
\renewcommand{\sampk}[1]{}
\title{Wealth Dynamics Over Generations: \\ Analysis and Interventions}
\author[1]{Krishna Acharya}
\author[2]{Eshwar Ram Arunachaleswaran}
\author[2]{Sampath Kannan}
\author[2]{Aaron Roth}
\author[1]{Juba Ziani}
\affil[1]{Georgia Tech ISyE}
\affil[2]{University of Pennsylvania Department of Computer and Information Sciences}
\begin{document}

\maketitle

\begin{abstract}
We present a stylized model with feedback loops for the evolution of a population's wealth over generations. Individuals have both talent and wealth: talent is a random variable distributed identically for everyone, but wealth is a random variable that is dependent on the population one is born into. Individuals then apply to a downstream agent, which we treat as a university throughout the paper (but could also represent an employer) who makes a decision about whether to admit them or not. The university does not directly observe talent or wealth, but rather a signal (representing e.g. a standardized test) that is a convex combination of both. The university knows the distributions from which an individual's type and wealth are drawn, and makes its decisions based on the posterior distribution of the applicant's characteristics conditional on their population and signal. Each population's wealth distribution at the next round then depends on the fraction of that population that was admitted by the university at the previous round.

We study wealth dynamics in this model, and give conditions under which the dynamics have a single attracting fixed point (which implies population wealth inequality is transitory), and conditions under which it can have multiple attracting fixed points (which implies that population wealth inequality can be persistent). In the case in which there are multiple attracting fixed points, we study interventions aimed at eliminating or mitigating inequality, including increasing the capacity of the university to admit more people, aligning the signal generated by individuals with the preferences of the university, and making direct monetary transfers to the less wealthy population. 

\end{abstract}

\section{Introduction}

The wealth of a population evolves over generations as a function of the opportunities available to it. Opportunities available to a generation  depend not only on their talent, but also on the wealth of the previous generation. In such a dynamical system, the initial wealth of a population determines how wealth evolves and what it will be in the limit. Understanding this system can help illuminate when and why inequalities can arise and persist. 

In this paper we define and analyze a simple, mathematically-tractable model for this feedback system, before considering possible interventions to make its behavior more equitable.  To discuss the main conclusions of our paper, we first need to provide a sketch of our model. Individuals are divided across multiple \emph{populations}, and have both a \emph{type} (an abstraction of talent) and a \emph{wealth}. Within a single population, the distribution of wealth and types are given by Gaussians with known means and variances. Types are distributed identically across populations, but each population has its own distribution of wealth. An individual from a particular population is sampled by sampling their type $T$ from the (universal) type distribution, and their wealth $W$ from the wealth  distribution particular to their population. An individual then generates a \emph{signal} $S = \beta T + (1-\beta) W$, i.e., some convex combination of their wealth and type. This signal could represent e.g. an individual's score on a standardized test, or the rating that results from an interview. Here we allow that the signal might have a dependence on wealth rather than just type because of the indirect effects it can have on evaluations: for example, the ability to engage in additional test preparation.  Downstream, a university\footnote{Throughout this paper we describe the downstream agent as a university admitting students. However we could also view the downstream agent as an employer hiring employees, or any other agent allocating opportunities based on evidence that conflates talent and wealth that have effects on the long-term wealth of the selected individuals.}  who has no direct knowledge of applicant's types or wealth (but with knowledge of the distributions from which they were drawn) observes the signal, and forms a posterior belief about the applicant's type and wealth. The university seeks to select individuals for whom another convex combination $\alpha T + (1 - \alpha) W$
exceeds some threshold $\thr$, and so selects exactly those applicants for whom $\E[\alpha T + (1 - \alpha) W|S] \geq \thr$. Here again we allow that the university might have an explicit preference for wealth (and not purely for type). This might represent e.g. a desire for full tuition payments or future alumni donations, or a more nebulous desire for ``culture fit'' or for skills associated with wealth (e.g. students who can walk on to the sailing or squash team). For each population we then let the mean wealth of the next generation be a non-decreasing function of the fraction of people admitted to the university at the previous round. We also assume that the distribution of types (or talent) remains unchanged over generations and is identical for all individuals, independent of their population.

First, we consider the fixed points of these wealth dynamics. If there is only a single fixed point (and the dynamics converge to it), this implies that wealth inequality across population groups is transitory, and that over time it will equalize (as the mean wealth of all populations  moves to the single attracting fixed point). On the other hand, if there are multiple fixed points of the wealth dynamics, then wealth inequality can persist, with different populations ``stuck'' at different fixed points. 

We give conditions under which the dynamics correspond to a contraction map and have a single fixed point (implying that wealth inequality is transitory). These conditions in particular include the case when $\alpha = 1$ --- i.e. when the university is selecting entirely based on inferred type. On the other hand, there are other situations (in which, necessarily the university places some weight $(1-\alpha) > 0$ on wealth) in which case there can be two attracting fixed points (and a third unstable fixed point), which can result in persistent inequality absent intervention: one population can be ``trapped'' in the less wealthy fixed point, while the other one is in the more wealthy fixed point.

We then turn our attention to interventions. We focus our study of interventions on ways to move a population's wealth from the lower fixed point to the higher fixed point, or to modify the dynamics so that there is a single attracting fixed point (which leads to wealth equality). We consider three types of interventions:
\begin{enumerate}
    \item  \textbf{Increasing The Capacity of the University}: We consider what happens when the university is able to admit more applicants (by lowering its threshold $\thr$). We show that doing this has positive effects: either it shifts the dynamics from the regime in which there are multiple fixed points to the regime in which there is a single fixed point (thus leading to long-term wealth equality), or it \emph{raises the wealth of both attracting fixed points}. 
    \item \textbf{Changing the Design of the Signal $S$}: We consider what happens if we are able to better align the signal the university receives with the university's objective function (by shifting $\beta$ closer to $\alpha$ --- i.e. by having the signal weight type and wealth more similarly to how they are weighted in the university's objective function). We show that as $\beta$ is moved closer to $\alpha$ the disparity between the two fixed points is reduced. Notably, and perhaps counter-intuitively, making the signal depend more on type (by increasing $\beta$) is \emph{not} always the way to reduce disparities (despite the fact that type is distributed identically across populations). 
    \item \textbf{Direct Subsidies to the Disadvantaged Population}: Finally we consider making direct financial subsidies to the disadvantaged population, to shift them from the lower wealth attracting fixed point to the basin of attraction of the higher wealth fixed point (from which they will naturally proceed to the higher wealth fixed point without further intervention). We consider a parameterized family of objective functions that the designer might have, that differ in how they relatively weight the \emph{cost} of the subsidy with the wealth of the disadvantaged population, and in how they discount time. Within this class of interventions, we focus on two options: the most aggressive ``1-shot'' option makes a large 1-shot payment to directly increase the wealth of the disadvantaged population to move them to the basin of attraction of the wealthier fixed point. The least aggressive ``limiting'' option makes the minimal payment per round that is guaranteed to cause eventual convergence to the wealthier fixed point. We derive conditions under which the ``1-shot'' option is preferred by the designer over the ``limiting'' option and vice versa.
    
    
\end{enumerate}

\subsection{Discussion and Limitations}
For mathematical tractability, we study a simple stylized model, which should be viewed as a first cut at attempting to model wealth inequality rather than a faithful description of the full problem. For example, we have assumed that the university has access to an applicant's wealth only indirectly via inferences that can be drawn from their test score and population. In practice, a university has  a number of other signals at their disposal. One should interpret the wealth populations in our model as equivalence classes induced by the information available to them at admissions time. \sampk{Not sure what the previous sentence means... needs clarification.} Similarly, we have modeled individual talent via a static ``type'' distribution, when in fact talent is multi-dimensional and not static (and might \emph{depend} on opportunities that different populations might have different access to prior to university admissions). We have not modelled university capacity constraints, and this allows us to treat each population independently of the others.

Nevertheless, several qualitative takeaways emerge from our modelling that we think are interesting:  for example, in our model, the persistence of inequality (multiple attracting fixed points) depends on the university using a selection rule that intentionally takes into account wealth, rather than just talent. We show that if the  university puts $\alpha = 1$ weight on type in our model, there is only a single fixed point. This suggests that changes in admission policies that reduce the focus on wealth (for example, switching to need-blind admissions and  reducing or eliminating legacy admissions) might have beneficial long-term effects. As another takeaway, we find that aggressive interventions (in our model, that aim to  lift the lower-wealth population in one shot to the basin of attraction of the higher-wealth fixed point) are often the most cost effective in the long run, compared to more modest interventions that would accomplish the same goal after $k > 1$ rounds. On the other hand, incremental interventions become optimal when society heavily discounts the future. This suggests that institutions that are able to formulate longer-term goals, but have the resources to act on them immediately (for example, non-profit universities with large endowments) may be able to combat wealth inequality more effectively.

\subsection{Related Work}
Our paper is related to economic models of inequality, which date back to \citet{arrow} and \citet{phelps}. For example, \citet{CL93} and \citet{FV92} study two stage models in which the existence of  self-confirming equilibria can cause inequality to be  persistent even when  populations are ex-ante identical.

More recently, the computer science community has begun studying dynamic models of fairness.  \citet{JJKMR17} study the costs  of imposing fairness constraints on learners in general Markov decision processes. \citet{HC18} study a dynamic model of the labor market similar to that of \citet{CL93,FV92} in which two populations are symmetric, but can choose to exert costly effort in order to improve their value to an employer. They study a two-stage model of a labor market in which interventions in a ``temporary'' labor market can lead to high-welfare symmetric equilibrium in the long run.~\citet{delayed} study a two-round model of lending in which lending decisions in the first round can change the type distribution of applicants in the 2nd round, according to a known, exogenously specified function. ~\citet{liu2020disparate} study a dynamic model where in each round, strategic individuals decide whether to invest in qualifications and the decision-maker updates his classifier that decides which individuals are qualified; they characterize the equilibria of such dynamics and develop interventions that lead to better long-term outcomes. \citet{pipelines-interventions} study a model in which decisions over individuals and populations are made along a multi-layered pipeline, where each layer corresponds to a different stage of life. They consider the algorithmic problem faced by a budgeted centralized designer who aims to intervene on the transitions between layers to obtain optimally fair outcomes, when such modifications are costly. \citet{kannan2019downstream} study a two stage model of affirmative action in which a college may set different admissions policies for an advantaged and disadvantaged group, but a downstream employer makes hiring decisions that maximize their expected objective given their posterior belief on student qualifications (that depend on the college's policies). \citet{jung2020fair} study an equilibrium model of criminal justice in which two populations with different outside option distributions make rational decisions as a function of criminal justice policy; they show that policies that have been proposed with equity considerations in mind (equalizing false positive and negative rates) actually emerge as optimal solutions to a social planner's optimization problem even without an explicit equity goal. 

We highlight two closely related papers.  \citet{heidari2021allocating} also study a model of inter-generational wealth dynamics across many rounds, in which both wealth (which they model as a binary) and talent play a role in success, as a function of opportunities that can be allocated to a limited portion of the population. Like us, \citet{heidari2021allocating} use college admissions as a running example of an institution allocating the opportunities, and like us, study a model in which admissions to college plays the role of determining wealth increase or decrease from one generation to the next. Our models differ in a number of specifics, but the primary difference between these two works is that \citet{heidari2021allocating} study the optimal policy for a very patient institution interested in maximizing its long-run payoff, and show that it recovers a form of affirmative action, preferentially offering opportunities to the less wealthy population so that it can reap the benefits of their resulting increased wealth in future generations. In contrast, we study institutions that are myopic, and optimize only for their short-term reward. In this setting, we study conditions under which such institutions do or do not perpetuate inequality, and study interventions in the settings in which they do.  \citet{mouzannar2019fair} study a continuous time dynamic in which two populations with binary, fully observable type are selected by a college with a myopic objective, whose selection rate within each population changes their type distribution at the next round. They study conditions under which imposing an ``affirmative action'' constraint (having the same selection rate within each population) can lead to equal or improved outcomes in the long run. In addition to various differences in the models (they study a setting with fully observed, binary types) and the class of interventions considered (we study changing what is observable to the university and offering direct subsidies), the most salient difference is that our model seeks to understand how wealth evolves over the long run, and has the crucial feature that wealth can be partially conflated with type in the signal observed by the university. We also focus on intervening directly on wealth via monetary subsidies. In contrast, \citet{mouzannar2019fair} do not model wealth, and in their dynamic, the type distribution directly evolves and is fully observable. Hence their myopic college does not need to do any inference as in our model. 

\section{Preliminaries}

\begin{definition}[Attracting fixed points]
Let $f:~\reals \to \reals$ be a real-valued function and let $x^*$ be such that $f(x^*) = x^*$. We call $x^*$ a fixed point of $f$. Further, let $a_t(x)$ be the sequence defined by $a_0 = x$ and $a_{t+1} = f(a_t)$; we say that $x^*$ is attracting for $x$ if and only if $a_t(x)$ converges to $x^*$.
\end{definition}

\begin{claim}[Attracting fixed points]\label{clm:attract}
Let $f$ be a real-valued, continuous, non-decreasing function such that $x^*$ is a fixed point of $f$. If $f(x) > x$ for all $x \in [a,x^*)$, $x^*$ is attracting on $[a,x^*)$. Similarly, if $f(x) < x$ for all $x \in (x^*,b]$, $x^*$ is attracting on $(x^*,b]$.
\end{claim}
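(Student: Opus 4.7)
The plan is to prove the first half of the claim by showing the orbit $(a_t(x))_{t \ge 0}$ is monotonically non-decreasing and bounded above by $x^*$, then invoking continuity to identify the limit; the second half follows by a symmetric argument.

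Fix $x \in [a, x^*)$ and write $a_t = a_t(x)$. First I would establish by induction on $t$ that $a_t \in [a, x^*]$ and $a_t \le a_{t+1}$. For the base case, $a_0 = x \in [a, x^*)$, and by hypothesis $f(a_0) > a_0$, while monotonicity of $f$ together with $f(x^*) = x^*$ gives $a_1 = f(a_0) \le f(x^*) = x^*$. For the inductive step, if $a_t \in [a, x^*]$, then $a_{t+1} = f(a_t) \le f(x^*) = x^*$ by monotonicity; if furthermore $a_t < x^*$, the hypothesis gives $a_{t+1} = f(a_t) > a_t$, and if $a_t = x^*$ the sequence is constant from that point on and we are done.

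Thus $(a_t)$ is non-decreasing and bounded above by $x^*$, so it converges to some limit $L \le x^*$. By continuity of $f$, passing to the limit in $a_{t+1} = f(a_t)$ yields $f(L) = L$, i.e.\ $L$ is a fixed point. It remains to rule out $L < x^*$. Since $a_0 \le L$ and $a_0 \ge a$, we have $L \in [a, x^*]$; if $L < x^*$ then $L \in [a, x^*)$ and the hypothesis $f(L) > L$ contradicts $f(L) = L$. Hence $L = x^*$, proving that $x^*$ is attracting on $[a, x^*)$.

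The second statement is completely symmetric: for $x \in (x^*, b]$, the orbit is non-increasing (since $f(a_t) < a_t$) and bounded below by $x^*$ (since $f$ is non-decreasing and $f(x^*) = x^*$), hence converges to a fixed point $L \ge x^*$, and $L > x^*$ is ruled out by $f(L) < L$. I do not foresee a real obstacle here; the only subtlety is to remember to use monotonicity of $f$ (not just the hypothesis $f(x) > x$) to confine the orbit to the interval $[a, x^*]$ so that the strict inequality hypothesis keeps applying, and to invoke continuity at the very end to identify the limit as a fixed point.
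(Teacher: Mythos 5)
Your proposal is correct and follows essentially the same route as the paper's proof: show the orbit is monotone and confined to $[a,x^*]$, hence convergent, then use continuity to identify the limit as a fixed point and the hypothesis $f(x)>x$ to rule out any limit below $x^*$. Your explicit handling of the case $a_t = x^*$ (where the orbit becomes constant) is a small point the paper glosses over, but the argument is otherwise the same.
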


\begin{proof}
Let $x \in [a,x^*)$. Let $a_t(x)$ be the sequence defined by $a_0 = x$ and $a_{t+1} = f(a_t)$.  $a_1 = f(a_0) > a_0$, Since $f$ is non-decreasing  $a_2 = f(f(a_0)) \geq f(a_0)$, but $f(x) > x$ $\forall x \in [a,x^*)$, so this inequality is in fact strict, i.e $a_2 > a_1$.
Note that by induction, we have for all $t$ that $x^* = f(x^*) \geq a_{t+1}(x) = f(a_t(x)) > a_t(x) \ldots > a_0$; hence $a_t$ is increasing and $a_t \in [a,x^*]$ for all $t$. In particular, $a_t$ is a convergent sequence with a finite limit in $[a,x^*]$. Now, since $f(x) > x$ for all $x < x^*$, $x^*$ is $f$'s unique fixed point on $[a,x^*]$. Because $f$ is continuous, we must have $\lim_{t \to +\infty} a_{t+1} = \lim_{t \to +\infty} f(a_{t}) = f \left( \lim_{t \to +\infty} a_t \right)$, i.e. the limit $l$ must satisfy $f(l) = l$. The only point on $[a,x^*]$ that satisfies this condition is $x^*$, yielding the first part of the result. A similar argument holds for the second part of the proof. 
\end{proof}

\section{Model}

We consider a university that has a non-atomic set of applicants from two  different sub-populations (or groups), denoted $1$ and $2$, and must decide which applicants to admit. Each applicant has a type $T$, where the types are random variables drawn i.i.d. from a known distribution $\cD$; we assume that the distribution of types is the same for both groups. 
Further, each applicant also has a wealth $W_i$; wealth is drawn i.i.d. from a known distribution $\cW_i$ which may depend on the applicant's group $i$. We assume that wealth and types are drawn independently of each other.

The university aims to make decisions based on each applicant's type and wealth, and would like to admit individuals for whom 
\[
\alpha T + (1 - \alpha) W \geq \thr,
\]
for some parameter $\alpha \in [0,1]$ and some threshold $\thr$. Here, $\alpha$ represents how much the university values type versus wealth. However, the university cannot directly observe this quantity. Instead, we assume that it can only see a score $S$ for each applicant (for example in the form of a standardized test result), where the score is a convex combination of both an applicant's type $T$ and wealth $W$. We write 
\[
S = \beta T + (1 - \beta) W,
\]
for some known $\beta \in [0,1]$. 
The university then performs a Bayesian update and admits individuals that satisfy
\[
\E_{T,W} \left[\alpha T + (1- \alpha) W | S \right] \geq \thr. 
\]

We are interested in understanding the long-term dynamics of a process where the university's decisions (made as described above) affect the individuals' future attributes. We consider a discrete time horizon, in which at each time step $t \in \mathbb{Z}^+$, the university's decisions shapes the distribution of wealth in each group in time step $t + 1$. In particular, we assume that the expected wealth $\mu_i^{t+1}$ of group $i$ in step $t+1$ is the fraction of group $i$ that is admitted by the university at time step $t$. I.e., we write 
\begin{align}\label{eq: update_rule_def}
\mu_i^{t+1} = \Pr_S \left[\E_{T,W} \left[\alpha T + (1- \alpha) W | S \right] \geq \thr \right]
\end{align}

This is motivated by the fact that students that are admitted to competitive universities are expected to reach better life outcomes and accumulate more wealth. 

In the rest of the paper, we make the following assumptions on the functional form of the type and wealth distributions:
\begin{assumption}
The  type $T$ at any time instant is drawn from the distribution $T \sim \cN\left(0,\stdt^2\right)$. The initial wealth at time $0$ for population $i$  satisfies $\mu^0 \in [0,1]$, and $W_i \sim \cN \left(\mu_i^t,\stdw^2\right)$ at time step $t$ for a fixed constant $\stdw$.
\end{assumption}
Note that the type can be centered around $0$ without loss of generality, by changing the value of $\thr$ used by the employer by the corresponding amount. The assumption $\mu^0 \in [0,1]$ is also without loss of generality, and simply renormalizes the average wealth of a group to be between $[0,1]$, so long as we consider populations with bounded wealth.
Note that with our assumptions the mean wealth of each group always stays in the range $[0,1]$
although the sampled wealth of individuals can fall outside this interval. 


\section{Wealth Dynamics and Properties}

We note that the dynamics of each group only depend on the decisions made by the university within that group. Therefore, we can treat groups independently. In this section, we focus on a single group at a time, and drop the dependencies on $i$ in our notations for simplicity. We show that several attracting fixed points can arise from our dynamics; in particular, there are regimes of parameters under which there is a low wealth fixed point that groups with initially low wealth converge to, and a high wealth fixed point that groups with initially high wealth converge to. In Section~\ref{sec: interv_subsidies}, we consider interventions that apply to more general update functions that the ones described in this Section, so long as they have similar fixed point properties. 

\subsection{Computing the Wealth Update Rule} We start by characterizing the joint distributions of the type $T$, the wealth $W$, and the score $S$. 

\begin{claim}\label{clm:bivar}
Let $\mu \triangleq \E [W]$. We have that $(T, S)$ forms a bivariate Gaussian distribution with mean $(0, (1-\beta)\mu)$ and covariance matrix
$$
\begin{bmatrix}
\stdt^2 &  \beta \stdt^2\\
\beta \stdt^2 & \beta^2\stdt^2 + (1-\beta)^2\stdw^2.
\end{bmatrix}
$$
Similarly, $(W, S)$ forms a bivariate Gaussian distribution with mean $(\mu, (1-\beta)\mu)$ and covariance matrix
$$
\begin{bmatrix}
\stdw^2 &  (1-\beta) \stdw^2 \\
(1-\beta) \stdw^2 & \beta^2\stdt^2 + (1-\beta)^2\stdw^2.
\end{bmatrix}
$$ 
\end{claim}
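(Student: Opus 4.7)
The plan is to prove this by a direct linear-transformation argument followed by straightforward moment computations. Since $T$ and $W$ are independent Gaussians, the vector $(T, W)$ is jointly Gaussian. The pair $(T, S)$ is obtained from $(T, W)$ by the linear map $(t, w) \mapsto (t, \beta t + (1-\beta) w)$, and any affine image of a jointly Gaussian vector is jointly Gaussian, so $(T, S)$ is automatically bivariate Gaussian. The same reasoning applies to $(W, S)$. This reduces the claim to checking the mean vector and the four covariance entries in each case.

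For the $(T, S)$ computation, I would note that $\E[T] = 0$ by assumption, while $\E[S] = \beta \E[T] + (1-\beta) \E[W] = (1-\beta)\mu$, giving the mean vector. The diagonal entries are $\mathrm{Var}(T) = \stdt^2$ and $\mathrm{Var}(S) = \beta^2 \stdt^2 + (1-\beta)^2 \stdw^2$ where the latter uses the independence of $T$ and $W$. For the off-diagonal, I would expand
\[
\mathrm{Cov}(T, S) = \mathrm{Cov}(T, \beta T + (1-\beta) W) = \beta \mathrm{Var}(T) + (1-\beta)\mathrm{Cov}(T, W) = \beta \stdt^2,
\]
again using $\mathrm{Cov}(T, W) = 0$.

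For $(W, S)$, the argument is symmetric: $\E[W] = \mu$ and $\E[S] = (1-\beta)\mu$ as before. The variances are $\mathrm{Var}(W) = \stdw^2$ and $\mathrm{Var}(S) = \beta^2 \stdt^2 + (1-\beta)^2 \stdw^2$, and
\[
\mathrm{Cov}(W, S) = \beta \mathrm{Cov}(W, T) + (1-\beta) \mathrm{Var}(W) = (1-\beta) \stdw^2.
\]
Assembling these into the matrices gives exactly the stated covariance structures.

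There is no real obstacle here; the only subtle point worth being explicit about is the justification for joint Gaussianity (that an affine map of a Gaussian vector is Gaussian), since marginal Gaussianity of $T$ and $S$ alone would not suffice to conclude bivariate Gaussianity. Once that is noted, the rest is mechanical.
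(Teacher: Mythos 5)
Your proposal is correct and follows essentially the same route as the paper's proof: joint Gaussianity from the fact that $(T,S)$ and $(W,S)$ are affine images of the independent Gaussian pair $(T,W)$, followed by bilinearity of covariance with $\mathrm{Cov}(T,W)=0$ to fill in the matrix entries. You are in fact slightly more explicit than the paper about why the pairs are \emph{jointly} Gaussian, which is a welcome addition rather than a deviation.
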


The proof is provided in Appendix~\ref{app:bivar}. This allows us to compute the update function that maps the wealth of a group in the current round, $\mu^t$, to the wealth of that same group in the next round, $\mu^{t+1}$:

\begin{lemma}\label{lem:update_rule}
Recall that $\tau$ is the threshold used by the university to decide admission. At every time step $t$, we have
\begin{align*}
\mu^{t+1}  = 1 - \Phi\left( K\left(\alpha, \beta, \stdt, \stdw\right)\left(\thr-(1-\alpha) \mu^t\right) \right),
\end{align*}
where $K\left(\alpha, \beta, \stdt, \stdw)\right)  \triangleq \frac{\sqrt{\beta^2 \stdt^2 + (1 - \beta)^2 \stdw^2}}{\alpha \beta \stdt^2 + (1 - \alpha) (1 - \beta) \stdw^2}$ and $\Phi$ is the cumulative density function of a standard Gaussian. We denote the update rule function
\begin{align}\label{eq: update_rule}
f(x) \triangleq 1 - \Phi\left(K\left(\alpha, \beta, \stdt, \stdw\right) \left(\thr-(1-\alpha) x\right) \right).
\end{align}
\end{lemma}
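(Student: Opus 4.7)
The plan is to use Claim~\ref{clm:bivar} to reduce the admission criterion $\E[\alpha T + (1-\alpha) W \mid S] \geq \thr$ to a simple threshold condition on the Gaussian random variable $S$, then read off the admission probability from the CDF of $S$.

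First, I would apply the standard bivariate Gaussian conditional-mean formula $\E[X \mid S=s] = \E[X] + \frac{\mathrm{Cov}(X,S)}{\mathrm{Var}(S)}(s - \E[S])$ separately to $X = T$ and $X = W$, using the means, variances, and covariances supplied by Claim~\ref{clm:bivar}. Writing $V \triangleq \beta^2 \stdt^2 + (1-\beta)^2 \stdw^2 = \mathrm{Var}(S)$ and $A \triangleq \alpha \beta \stdt^2 + (1-\alpha)(1-\beta)\stdw^2$, linearity of conditional expectation yields
\begin{equation*}
\E\bigl[\alpha T + (1-\alpha) W \,\bigm|\, S = s\bigr] \;=\; (1-\alpha)\mu \;+\; \frac{A}{V}\bigl(s - (1-\beta)\mu\bigr).
\end{equation*}
This is an increasing affine function of $s$ (since $A \geq 0$), so the event $\{\E[\alpha T + (1-\alpha) W \mid S] \geq \thr\}$ is equivalent to $\{S \geq s^*\}$, where solving the linear inequality gives $s^* = (1-\beta)\mu + \tfrac{V}{A}(\thr - (1-\alpha)\mu)$.

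Next, since $S \sim \cN((1-\beta)\mu, V)$ by Claim~\ref{clm:bivar}, standardizing gives
\begin{equation*}
\Pr_S[S \geq s^*] \;=\; 1 - \Phi\!\left(\frac{s^* - (1-\beta)\mu}{\sqrt{V}}\right) \;=\; 1 - \Phi\!\left(\frac{\sqrt{V}}{A}\bigl(\thr - (1-\alpha)\mu\bigr)\right).
\end{equation*}
Recognizing $\sqrt{V}/A = K(\alpha,\beta,\stdt,\stdw)$ and substituting $\mu = \mu^t$ yields the claimed expression for $\mu^{t+1}$ via the definition~\eqref{eq: update_rule_def}.

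The only mild subtlety is handling degenerate cases where $A = 0$, which occurs only when both $\alpha\beta\stdt^2 = 0$ and $(1-\alpha)(1-\beta)\stdw^2 = 0$ (i.e., at corner combinations like $\alpha = 0, \beta = 0$ or $\alpha = 1, \beta = 1$); these edge cases make the conditional expectation constant in $s$ and can be treated separately, or excluded by implicit nondegeneracy of $K$. Otherwise the argument is a direct calculation, and I expect no real obstacle beyond carefully tracking the coefficients in the conditional mean formula.
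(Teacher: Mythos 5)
Your proposal is correct and follows essentially the same route as the paper: compute $\E[T\mid S]$ and $\E[W\mid S]$ from Claim~\ref{clm:bivar}, combine them linearly, reduce the admission rule to a one-sided threshold on $S$, and standardize. The only quibble is in your aside on degeneracy: $A = \alpha\beta\stdt^2 + (1-\alpha)(1-\beta)\stdw^2$ vanishes at $(\alpha,\beta) = (0,1)$ or $(1,0)$, not at $(0,0)$ or $(1,1)$ (where one of the two nonnegative terms is strictly positive); note that the paper silently assumes $A>0$ as well, since its rearrangement of the admission inequality divides by $A$ without flipping the direction.
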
 

For simplicity of notation, we omit the dependency of $f$ in the parameters of the problem when clear from context. When not, we explicitly write the dependency of $f$ in the parameters of interest. The proof of Lemma~\ref{lem:update_rule} is mostly algebraic, and is provided in Appendix \ref{app: proof_update_rule}.

\subsection{Fixed Points and Convergence of the Dynamics}
We can now use the closed-form expression for the update rule to study the properties of the wealth dynamics. In this section, we bound the number of fixed points of our dynamics, provide properties of these fixed points, and characterize which fixed point each initial wealth converges to. We start by noting that the update rule has a simple shape. Indeed:
\begin{claim}\label{clm: f_shape}
$f(x)$ is continuous and increasing in $x$. Further, $f$ is convex on $[0,x^*]$ and concave on $[x^*,1]$ where 
\begin{align*}
x^* 
= \begin{cases}
0 &\text{if}~\thr \leq 0,
\\\frac{\thr}{1 - \alpha} &\text{if}~0 < \thr < 1 - \alpha,
\\1~\text{if} &\thr \geq 1 - \alpha.
\end{cases}
\end{align*}
\end{claim}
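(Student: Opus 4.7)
The plan is to verify each of the three claimed properties (continuity, monotonicity, and the convex/concave split) by direct differentiation of the closed form
\[
f(x) = 1 - \Phi\bigl(K(\thr - (1-\alpha)x)\bigr),
\]
where $K = K(\alpha,\beta,\stdt,\stdw) > 0$. Continuity is immediate, since $\Phi$ is continuous and the argument is an affine function of $x$.

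For monotonicity, I would set $u(x) \triangleq K(\thr - (1-\alpha)x)$, so $u'(x) = -K(1-\alpha) \le 0$, and compute
\[
f'(x) = -\phi(u(x))\cdot u'(x) = K(1-\alpha)\,\phi(u(x)) \ge 0,
\]
where $\phi$ is the standard Gaussian density. Since $\phi > 0$ everywhere and $K(1-\alpha) \ge 0$, this gives non-decreasingness (and strict increase whenever $\alpha < 1$).

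For the convex/concave split, I would use the identity $\phi'(z) = -z\phi(z)$ and differentiate once more:
\[
f''(x) = K(1-\alpha)\cdot \phi'(u(x))\cdot u'(x) = K^2(1-\alpha)^2\, u(x)\,\phi(u(x)).
\]
Because $K^2(1-\alpha)^2\phi(u(x)) \ge 0$, the sign of $f''(x)$ matches the sign of $u(x)$, equivalently the sign of $\thr - (1-\alpha)x$. I would then split into the three cases on $\thr$ exactly as in the claim: if $\thr \le 0$, then $\thr - (1-\alpha)x \le 0$ on $[0,1]$, so $f$ is concave throughout (and $x^* = 0$ makes the ``convex on $[0,x^*]$'' statement vacuous); if $0 < \thr < 1-\alpha$, then $\thr - (1-\alpha)x$ changes sign precisely at $x^* = \thr/(1-\alpha) \in (0,1)$, giving convexity on $[0,x^*]$ and concavity on $[x^*,1]$; if $\thr \ge 1-\alpha$, then $\thr - (1-\alpha)x \ge 0$ on $[0,1]$, yielding convexity throughout (with $x^* = 1$ making the concave piece vacuous).

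There is no real obstacle here; the argument is a short calculus check. The only point requiring care is the degenerate case $\alpha = 1$, in which $f$ is constant in $x$: $f'' \equiv 0$, so $f$ is trivially both convex and concave on $[0,1]$, consistent with the stated cases (either $\thr \le 0$ giving $x^* = 0$, or $\thr \ge 0 = 1-\alpha$ giving $x^* = 1$, with the interior case $0 < \thr < 1-\alpha$ being empty).
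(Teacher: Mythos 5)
Your proof is correct and follows essentially the same route as the paper's: continuity from composition, $f'(x) = K(1-\alpha)\phi\bigl(K(\thr-(1-\alpha)x)\bigr) \ge 0$ for monotonicity, and a second derivative whose sign is that of $\thr - (1-\alpha)x$, giving the convex/concave split at $x^* = \thr/(1-\alpha)$. Your explicit treatment of the degenerate case $\alpha = 1$ is a small additional care the paper omits, but the argument is otherwise identical.
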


The proof of the above claim is given in Appendix~\ref{app: f_shape}. We now use the above properties on the shape of $f$ to derive properties of its fixed point. First we remark that $f$ has at least one fixed point, since $f(0) > 0$ and $f(1) < 1$, and $f$ is continuous. Now, note that the number of fixed points of $f$ is also upper-bounded:
\begin{lemma}\label{lem:fixed_pts}
Suppose $0 < \thr < 1 - \alpha$, then $f(x) = x$ has at most $3$ solutions for $x \in [0,1]$. If $f$ has $3$ fixed points $z_1 < z_2 < z_3$, it must be that $z_1 < \frac{\tau}{1-\alpha} < z_3$. If $\thr \leq 0$ or $\thr \geq 1-\alpha$, $f(x) = x$ only has a single solution for $x \in [0,1]$.
\end{lemma}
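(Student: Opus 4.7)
The plan is to analyze $g(x) \triangleq f(x) - x$ and count its zeros on $[0,1]$ in each of the three parameter regimes. The main tool is Rolle's theorem applied to $g'$, combined with the observation that $f'$ has a very constrained shape: since $f(x) = 1 - \Phi(K(\tau - (1-\alpha)x))$, we have $f'(x) = K(1-\alpha)\phi(K(\tau - (1-\alpha)x))$, which is a scaled Gaussian density in $x$, strictly increasing on $(-\infty, \tau/(1-\alpha))$, strictly decreasing on $(\tau/(1-\alpha), \infty)$, and attaining its unique maximum at $x^* = \tau/(1-\alpha)$ (whenever this point is interior to $[0,1]$).

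For the main case $0 < \tau < 1-\alpha$, I would observe that $g'(x) = f'(x) - 1$ inherits the ``bell'' shape of $f'$: it is increasing on $[0, x^*]$ and decreasing on $[x^*, 1]$, hence crosses any horizontal level — in particular $0$ — at most twice on $[0,1]$. By Rolle's theorem, the number of zeros of $g$ is at most one more than the number of zeros of $g'$, giving at most $3$ fixed points. For the location statement, suppose $g$ has three zeros $z_1 < z_2 < z_3$; Rolle then forces $g'$ to have zeros $y_1 \in (z_1, z_2)$ and $y_2 \in (z_2, z_3)$. Since $g'$ is unimodal with maximum at $x^*$, two distinct zeros can exist only if $g'(x^*) > 0$ and these zeros must bracket the maximum, i.e.\ $y_1 < x^* < y_2$. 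Chaining inequalities yields $z_1 < y_1 < x^* < y_2 < z_3$, which is precisely $z_1 < \tau/(1-\alpha) < z_3$.

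For the two boundary regimes, I would appeal directly to Claim~\ref{clm: f_shape}. If $\tau \leq 0$, then $x^* = 0$ so $f$ (and hence $g$) is concave on the whole interval $[0,1]$; if $\tau \geq 1-\alpha$, then $x^* = 1$ so $f$ (and hence $g$) is convex on $[0,1]$. In either case, combined with the boundary facts $g(0) = f(0) > 0$ and $g(1) = f(1) - 1 < 0$ (which follow because $\Phi$ takes values in $(0,1)$), a standard argument rules out two interior zeros: a second zero together with the endpoint sign constraint would force, via the convexity/concavity inequality applied to the appropriate convex combination of $0$, the zeros, and $1$, a contradiction with either $g(0) > 0$ or $g(1) < 0$. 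Since $g$ is continuous and changes sign, at least one zero exists; thus there is exactly one fixed point.

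The only step I expect to require some care is the Rolle-based bound in the first regime, specifically the argument that $g'$ has at most two zeros. This needs the explicit observation that $f'$ is a Gaussian density (not merely that $f$ is convex-then-concave), because general convex-then-concave functions can have derivatives that cross any horizontal line more than twice. The Gaussian structure, being strictly unimodal, is what pins down the bound and the bracketing relation with $x^*$.
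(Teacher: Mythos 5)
Your proposal is correct and follows essentially the same route as the paper's proof: both reduce to counting zeros of $g = f - x$ via the fact that $g' = f'-1$ is strictly unimodal with peak at $x^* = \tau/(1-\alpha)$ (hence has at most two zeros, which must bracket $x^*$), and both handle the regimes $\tau \le 0$ and $\tau \ge 1-\alpha$ by the convexity/concavity of $f$ from Claim~\ref{clm: f_shape} together with $g(0)>0$, $g(1)<0$. The only cosmetic difference is that you invoke Rolle's theorem where the paper tracks the sign of $g'$ on the three monotone pieces of $g$; your explicit remark that strict unimodality of $f'$ (not mere convex-then-concave structure of $f$) is what pins down the count is a fair point that the paper leaves implicit.
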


The proof is provided in Appendix~\ref{app:fixed_pts}. Lemma~\ref{lem:fixed_pts} has direct implications for disparities across groups with different starting expected wealth. In particular, the number of fixed points of $f$ determines whether different groups must converge to equal wealth (the case in which there is only a single fixed point) in the long-run or whether there are cases in which wealth inequality is persistent (the case in which there are multiple fixed points). We discuss these implications in more details in the rest of this section. 

\paragraph{The Case of a Single Fixed Point} We now study conditions under which $f$ has single vs. multiple fixed points. We first consider the case of a single fixed point. In this case, we remark that the single fixed point has the following property:
\begin{claim}\label{clm: oneFP_attracting}
If $z$ is the single fixed point of $f$, then $z$ is attracting on $[0,1]$.
\end{claim}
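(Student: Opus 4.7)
My plan is to reduce to Claim~\ref{clm:attract}. Since that claim already turns ``$f(x)>x$ on $[a,x^*)$'' and ``$f(x)<x$ on $(x^*,b]$'' into attraction, the entire task is to verify these two strict inequalities with $a=0$, $b=1$, $x^*=z$, and then invoke the claim on each side of $z$.

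To set this up, I would first check that the endpoints fall strictly on the correct side of the diagonal: $f(0) = 1-\Phi(K\thr) > 0$ and $f(1) = 1-\Phi(K(\thr-(1-\alpha))) < 1$, both because $\Phi$ takes values strictly in $(0,1)$. In particular, $0$ and $1$ are not fixed points, so the unique fixed point $z$ lies in $(0,1)$. Now define the continuous function $g(x) \triangleq f(x) - x$ on $[0,1]$; we have $g(0)>0$, $g(1)<0$, and $z$ is the unique zero of $g$ in $[0,1]$. On $[0,z)$, suppose for contradiction that $g(y) \leq 0$ for some $y$; the case $g(y)=0$ directly contradicts uniqueness of $z$, while $g(y)<0$ combined with $g(0)>0$ would, by the intermediate value theorem, produce a zero of $g$ in $(0,y) \subset [0,z)$, again contradicting uniqueness. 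Hence $f(x)>x$ on $[0,z)$, and the symmetric argument starting from $g(1)<0$ gives $f(x)<x$ on $(z,1]$.

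With these two inequalities in hand, Claim~\ref{clm:attract} applied to $f$ (which is continuous and nondecreasing by Claim~\ref{clm: f_shape}) yields that $z$ is attracting on $[0,z)$ and on $(z,1]$; together with the trivial fact that the constant sequence starting at $z$ converges to $z$, this gives attraction on all of $[0,1]$. I do not expect any real obstacle here: the only thing to be slightly careful about is invoking uniqueness of the fixed point strongly enough to rule out both equality and sign change of $g$ on each side of $z$, but the IVT argument sketched above handles both cases cleanly.
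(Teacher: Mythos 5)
Your proof is correct and follows essentially the same route as the paper's: establish $f(0)>0$, $f(1)<1$, use uniqueness of the fixed point (via the intermediate value theorem) to conclude $f(x)>x$ on $[0,z)$ and $f(x)<x$ on $(z,1]$, and then invoke Claim~\ref{clm:attract}. The only difference is that you spell out the IVT/uniqueness step that the paper states without justification, which is a welcome addition rather than a departure.
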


\begin{proof}
Since $f(0) > 0$, $f(1) < 1$, and $f$ is continuous and has a single fixed point $z$, it must be that $f(x) > f(z) = z$ for $x < z$ and $f(x) < f(z) = z$ for $x > z$. \ar{This could use a line of justification.} Applying Claim~\ref{clm:attract} concludes the proof. 
\end{proof}

This implies in particular that when $f$ has a single fixed point $z$, wealth dynamics converge to this fixed point no matter what the starting wealth was. This means in particular that there are no long-term disparities between populations of different initial socio-economic statuses (though they may take different amounts of time to reach the same wealth), i.e. the dynamics self correct for initial wealth disparities. Figure~\ref{fig: 1 FP with cobweb} shows an instantiation of a wealth update functions with a single fixed point and the corresponding wealth dynamics (in green); the plots illustrate convergence of the dynamics to the single fixed point starting both from an initially low wealth  (Figure~\ref{fig: 1 FP with cobweb} (a)) and from an initially high wealth (Figure~\ref{fig: 1 FP with cobweb} (b)).

\begin{figure}%
    \centering
    \subfloat[\centering Low initial wealth]{\includegraphics[width=6cm]{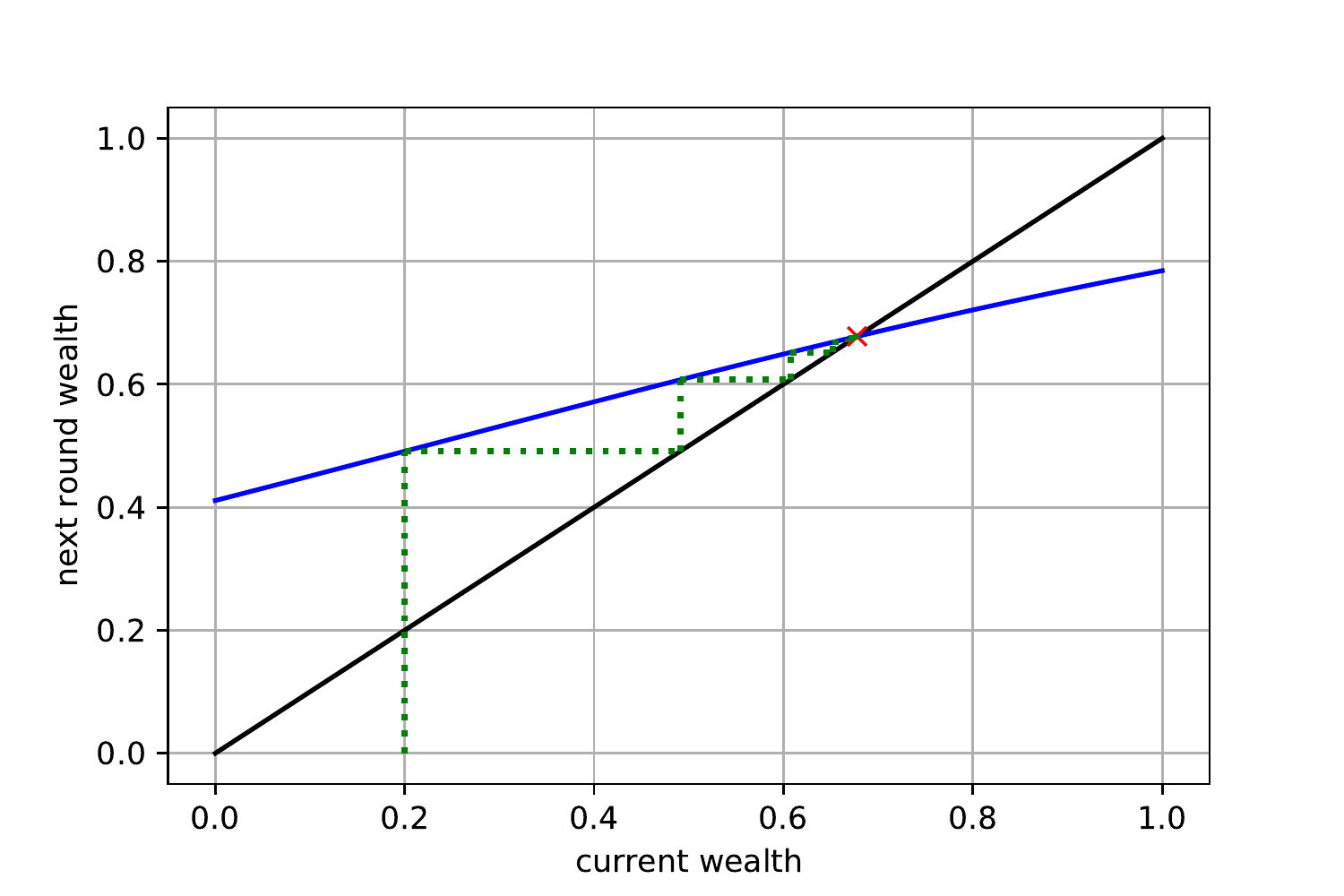}}%
    \qquad
    \subfloat[\centering High initial wealth]{{\includegraphics[width=6cm]{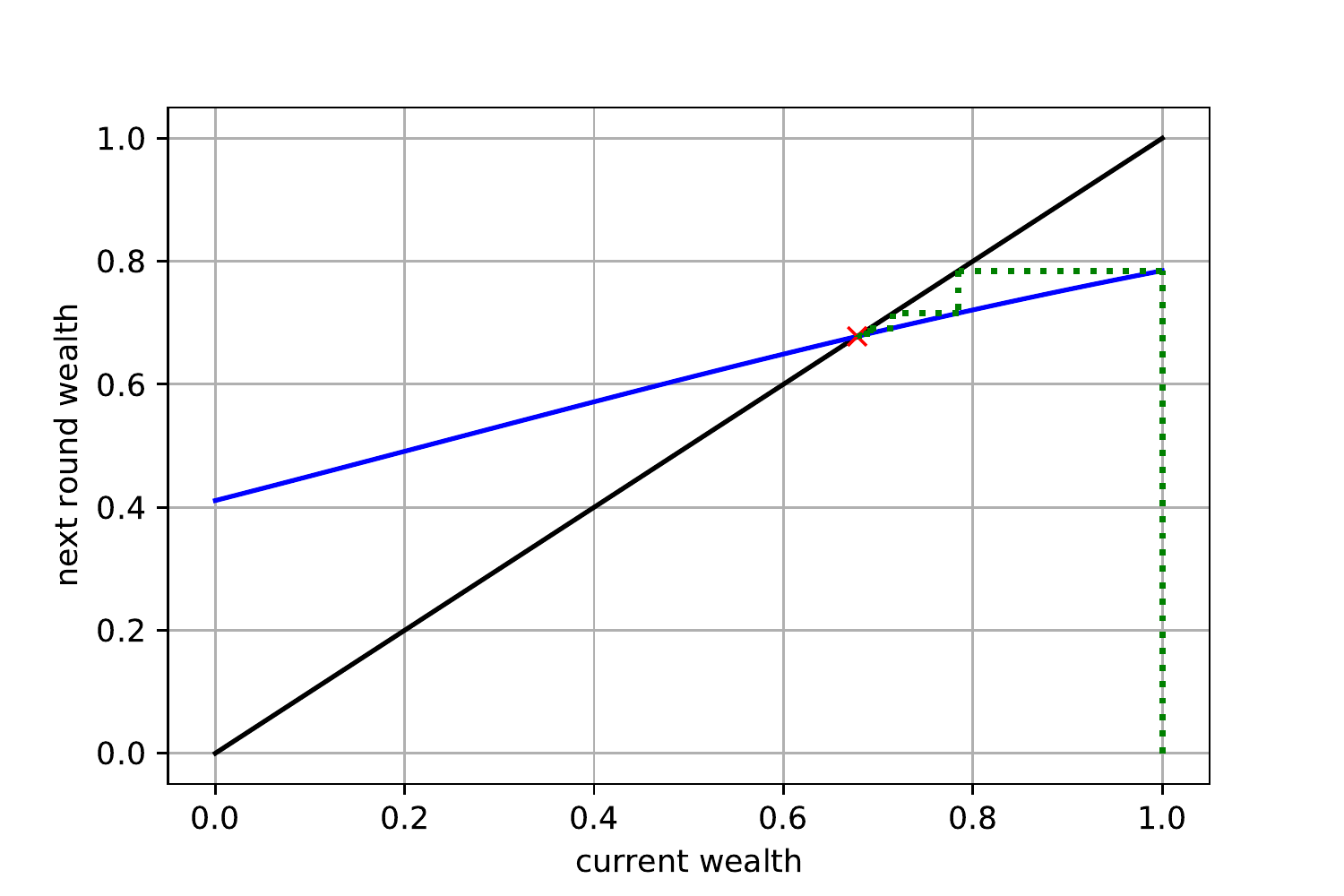}}}%
    \caption{A wealth update function with a single fixed point, for $\alpha = 0.1, \beta = 0.6, \gamma = 0.4, \sigma =  1.1, \tau =0.2$. The update function is plotted in blue, its single fixed point in red, and the wealth dynamics induced by the update function in green. Sub-figure (a) considers dynamics starting at an initial wealth of $0.2$ while sub-figure (b) considers dynamics starting at wealth $1.0$.}
    \label{fig: 1 FP with cobweb}%
\end{figure}

We note that Lemma~\ref{lem:fixed_pts} already implies that there exist interesting situations in which the dynamics have a single attracting fixed point and wealth dynamics are self-correcting. The first one is when $\thr$ is small ($\thr < 0$); i.e., the university is not very selective in its admissions. Intuitively, this leads to most individuals from any group being admitted (almost) independently of their starting wealth, which allows even economically disadvantaged groups to build wealth over time. The other situation deriving from Lemma~\ref{lem:fixed_pts} arises when $\thr > 1 - \alpha$. This can arise for two reasons: first, is the university is very selective and sets high values of $\thr$, wealth becomes insufficient to qualify an individual for admission (as then $(1 - \alpha)E[W|S] \leq 1 - \alpha < \thr$); an agent must have sufficiently high (inferred) type to be admitted, which helps reduce disparities due to wealth. This can also arise when $\alpha$ is large and the university is mostly interested in type over wealth. Intuitively, in this case, the university pays significant attention to their posterior belief on the type of an individual, which facilitates equalizing the treatment of groups of different wealth since they have the same type distributions; while the university cannot observe type directly, they discount for average wealth more (by a factor of $(1 - \alpha) \mu$) hence correct for wealth disparities more as $\alpha$ is smaller. 

Below, we provide an additional condition under which $f$ has a single fixed point:
\begin{claim}\label{clm: single_fp_condition}
If $K(\alpha,\beta,\stdt,\stdw) \leq \frac{\sqrt{2 \pi}}{1 - \alpha}$, $f$ is a contraction mapping and has a unique attracting fixed point.
\end{claim}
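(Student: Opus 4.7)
The plan is to verify the two assertions of the claim, \emph{contraction} and \emph{unique attracting fixed point}, by directly bounding the derivative of $f$. Recalling
\[
f(x) = 1 - \Phi\bigl(K(\thr - (1-\alpha)x)\bigr),
\]
a direct computation via the chain rule gives
\[
f'(x) = K(1-\alpha)\,\phi\bigl(K(\thr - (1-\alpha)x)\bigr),
\]
where $\phi$ is the standard Gaussian density. Since $\phi(y) \leq \frac{1}{\sqrt{2\pi}}$ for all $y \in \reals$, we obtain the uniform bound
\[
0 \leq f'(x) \leq \frac{K(1-\alpha)}{\sqrt{2\pi}}.
\]
Under the hypothesis $K \leq \frac{\sqrt{2\pi}}{1-\alpha}$, the right-hand side is at most $1$, so $f$ is non-expansive on $[0,1]$. (In the generic case $K < \frac{\sqrt{2\pi}}{1-\alpha}$, the Lipschitz constant is strictly less than $1$, which is the usual notion of contraction; the equality case degenerates at a single point $x = \thr/(1-\alpha)$ and can be handled by the same argument below.)

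Next, I would establish uniqueness of the fixed point. Consider $g(x) \triangleq f(x) - x$. By the derivative bound, $g'(x) = f'(x) - 1 \leq 0$, so $g$ is non-increasing on $[0,1]$. Moreover, because $\Phi$ takes values strictly in $(0,1)$, we have $f(0) > 0 = g(0) + 0$ and $f(1) < 1$, i.e., $g(0) > 0 > g(1)$. Continuity plus monotonicity of $g$ then forces a unique root, which is the unique fixed point of $f$ in $[0,1]$.

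For the \emph{attracting} property, I would invoke Claim~\ref{clm:attract}. We already know from Claim~\ref{clm: f_shape} that $f$ is continuous and non-decreasing. Letting $x^*$ denote the unique fixed point, the monotonicity of $g = f - x$ together with $g(0) > 0 > g(1)$ gives $f(x) > x$ on $[0, x^*)$ and $f(x) < x$ on $(x^*, 1]$. Claim~\ref{clm:attract} then yields that $x^*$ is attracting on all of $[0,1]$, completing the argument.

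The main conceptual step is recognizing that the Gaussian density's uniform bound $\phi \leq 1/\sqrt{2\pi}$ plugs directly into the condition on $K(1-\alpha)$; the rest is routine calculus plus the earlier claims. The only mildly delicate point is the boundary case $K(1-\alpha) = \sqrt{2\pi}$, where $f'$ attains $1$ at a single point. But since $f' < 1$ everywhere else, $g = f - x$ remains strictly decreasing in a neighborhood of any would-be second fixed point, so uniqueness (and hence the attracting property via Claim~\ref{clm:attract}) is preserved without modification.
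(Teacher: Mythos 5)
Your proposal is correct and follows essentially the same route as the paper: bound $f'$ by $K(1-\alpha)/\sqrt{2\pi}\le 1$ using the Gaussian density bound, deduce a unique root of $f(x)-x$ from $f(0)>0>f(1)-1$, and invoke Claim~\ref{clm:attract} for the attracting property. Your extra care at the boundary case $K(1-\alpha)=\sqrt{2\pi}$ (where $f$ is only non-expansive, with $f'=1$ at the single point $x=\thr/(1-\alpha)$) is a welcome refinement the paper glosses over, but it does not change the argument.
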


\begin{proof}
This immediately follows from $f'(x) = \frac{K (1-\alpha)}{\sqrt{2 \pi}} \exp \left(- K^2 (\thr - (1-\alpha) x)^2/2\right)$ and from 
\\$\exp \left(- K^2 (\thr - (1-\alpha) x)^2/2\right) \leq 1$ (with equality at $x = \thr/(1-\alpha)$).  Note that $f(0) > 0$ and $f(1) < 1$ so the fixed point $z$ must satisfy $f(x) < z$ if and only if $x < z$ and must be attracting.
\end{proof}

We note that $K(\alpha,\beta,\stdt,\stdw) = \frac{\sqrt{Var(S)}}{Cov(D, S)}$ where $D = \alpha T + (1-\alpha) W$ \kri{Can we use some other letter for university objective, since $\cD$ was used for distribution}. This implies that, holding the college's objective function (i.e., $\alpha$) constant, the better the scoring rule aligns with the university's admissions criteria (i.e. as the covariance between $D$ and $S$ increases), the smaller $K$ becomes. This makes the condition  that $f$ is a contraction mapping with a single fixed point easier to satisfy, which in turn causes wealth dynamics to self-correct for initial inequality. \kri{But $\alpha \to 1$, always gives contraction, since condition always satisfied}\juba{good catch, fixing} When $\alpha \to 1$, the condition is always satisfied, and $f$ has a single fixed point. This may not be surprising in that in this case the university only cares about type in admissions, and the university requires a higher threshold on scores for wealthier populations; this helps reduce disparities across populations with disparate wealth.

\paragraph{The Case of Multiple Fixed Points} We first characterize which fixed points are attracting when multiple points arise, and which regime of initial wealth lead to which fixed points. We focus on the case of three fixed points, as the case of two fixed points is a corner case than can only arise if $f(x)$ is tangent to $Id(x) = x$ at one of the fixed points.\footnote{Suppose this is not the case. $f(0) > 0$ hence $f(x) > x$ before the first fixed point. Because it is not tangent to the identity line, it must then be that $f(x) < x$ between the first and the second fixed point. Similarly, it must then be that $f(x) > x$ after the second, last fixed point. This contradicts $f(1) < 1$.} 

\begin{claim}
Suppose $f$ has $3$ fixed points, denoted $z_1 < z_2 < z_3$. Then $z_1$ is attracting for $[0,z_2)$ and $z_3$ is attracting for $(z_2,1]$. 
\end{claim}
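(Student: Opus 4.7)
The plan is to reduce the claim to two direct applications of Claim~\ref{clm:attract}, one for each of $z_1$ and $z_3$. Concretely, to invoke Claim~\ref{clm:attract} I need the sign of $g(x) \triangleq f(x) - x$ on each of the four subintervals $[0, z_1)$, $(z_1, z_2)$, $(z_2, z_3)$, $(z_3, 1]$ determined by the three fixed points. Since $g$ is continuous and its zeros are exactly $\{z_1, z_2, z_3\}$, $g$ has constant sign on each open subinterval, so I just need to identify which sign.

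The boundary signs are immediate: $g(0) = f(0) > 0$ gives $g > 0$ on $[0, z_1)$, and $g(1) = f(1) - 1 < 0$ gives $g < 0$ on $(z_3, 1]$. For the two middle intervals I would use the shape result of Claim~\ref{clm: f_shape}: $f$ (and hence $g$, since subtracting $x$ preserves convexity and concavity) is convex on $[0, x^*]$ and concave on $[x^*, 1]$, where by Lemma~\ref{lem:fixed_pts} we have $z_1 < x^* < z_3$. A convex function has at most two zeros, and the same is true of a concave function; coupled with the zero at $z_2$ lying in one of the two regions, a direct case analysis pins down the middle signs. For instance, if $z_2 \leq x^*$, then $g$ is convex on an interval containing $z_1$ and $z_2$ with $g(0) > 0$, forcing $g < 0$ strictly between these two zeros (otherwise a tangency would merge the two roots, contradicting that $f$ has exactly three fixed points); then on $[x^*, 1]$, $g$ is concave with $g(z_2) \leq 0$, $g(z_3) = 0$, $g(1) < 0$, and the concavity chord inequality applied to $z_2 < z_3 < 1$ (or $x^* \le z_2 < z_3$) forces $g > 0$ on $(z_2, z_3)$. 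The symmetric case $z_2 > x^*$ is handled identically. Thus the sign pattern of $g$ on the four subintervals is $+, -, +, -$.

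Applying Claim~\ref{clm:attract} twice then finishes the proof. From $f > x$ on $[0, z_1)$ and $f < x$ on $(z_1, z_2)$, the claim yields that $z_1$ is attracting on $[0, z_1)$ and on $(z_1, z_2)$; adjoining the fixed point itself gives attraction on $[0, z_2)$. Analogously, $f > x$ on $(z_2, z_3)$ and $f < x$ on $(z_3, 1]$ give that $z_3$ is attracting on $(z_2, z_3)$ and on $(z_3, 1]$, hence on $(z_2, 1]$. The main obstacle is the sign determination on the middle intervals: the potential pitfall is a tangential (non-crossing) zero of $g$, and the argument ruling this out uses the convex/concave structure together with the exact count of three fixed points (a tangency would effectively reduce the zero count or contradict the convex/concave sign inequalities at the boundaries of each monotone region).
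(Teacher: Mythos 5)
Your proposal is correct and follows essentially the same route as the paper: both reduce to determining the sign pattern $+,-,+,-$ of $g(x)=f(x)-x$ on the four subintervals and then (implicitly or explicitly) invoking Claim~\ref{clm:attract}. The only difference is that the paper reads the middle signs off the decreasing--increasing--decreasing structure of $g$ already established in the proof of Lemma~\ref{lem:fixed_pts}, which sidesteps the tangential-zero case analysis that your convexity/concavity chord argument has to carry out (and which, as you note, does go through using the exact count of three fixed points).
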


\begin{proof}
This follows from the proof of lemma~\ref{lem:fixed_pts}. Indeed, let $g(x) = f(x) - x$, we have that $g(0) = f(0) > 0$, then $g$ must decrease below $0$, increase above $0$, and decreases below $0$ again as $g(1) = f(1) - 1< 0$. This implies that $f(x) > x$ for $x < z_1$ and $x \in (z_2,z_3)$, while $f(x) < x$ for $x \in (z_1,z_2)$ and $x > z_3$. 
\end{proof}

In particular, when there are three fixed points, a population that starts with low wealth will converge to the first fixed point, while a group with large initial wealth will converge to the third fixed point. In this case, initial disparities in wealth persist in the long term, and interventions are needed for different populations to obtain equitable long-term wealth outcomes. Figure~\ref{fig: 3 FP with cobweb} shows a wealth update function with three fixed points and the corresponding dynamics for two different starting points. We note that starting at low wealth leads to convergence to the first and lowest fixed point, while starting at relatively high health leads to convergence to the highest fixed point. The figure illustrates how wealth disparities can propagate and amplify over time.  

\begin{figure}%
    \centering
    \subfloat[\centering Low initial wealth]{\includegraphics[width=6cm]{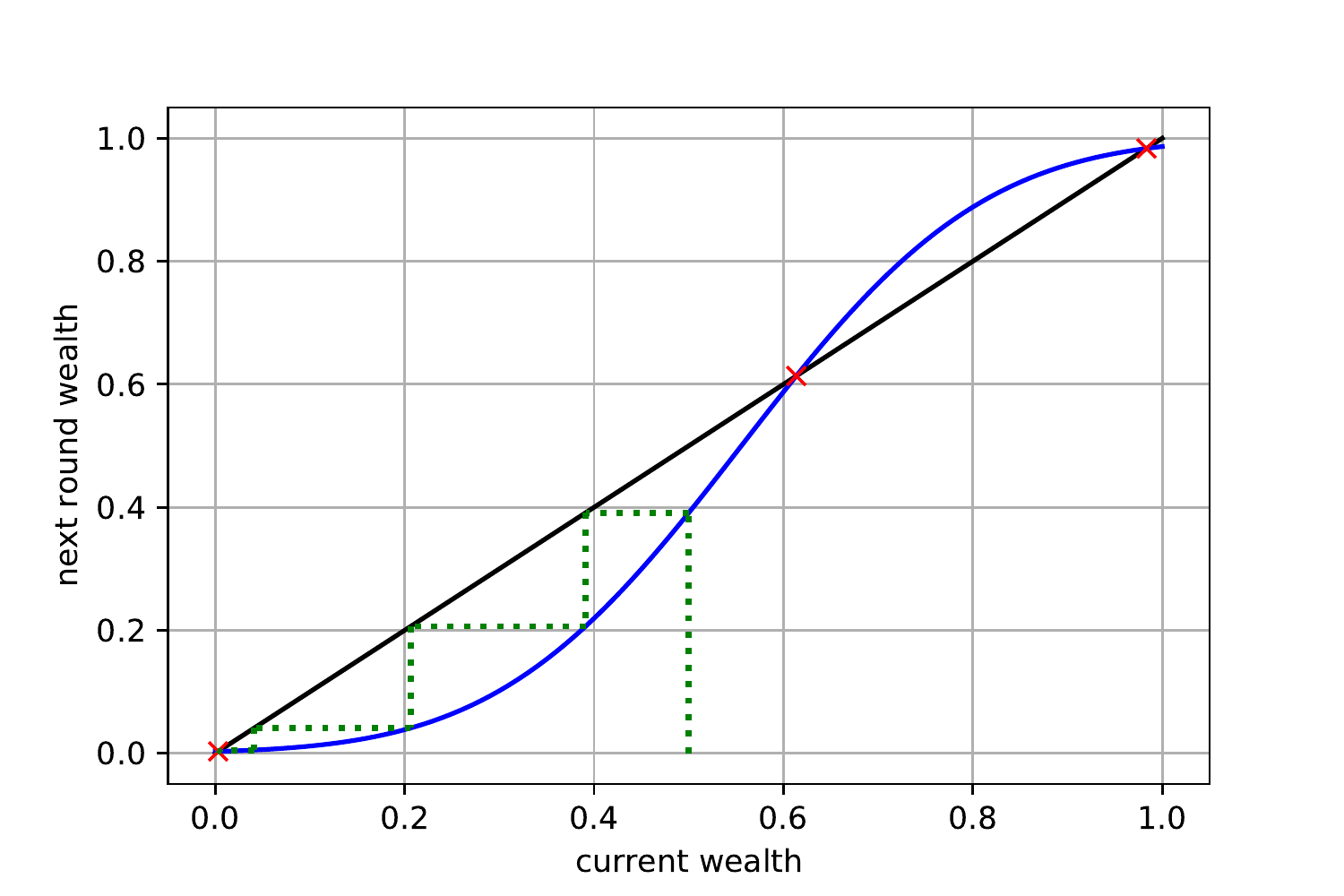}}%
    \qquad
    \subfloat[\centering High initial wealth]{{\includegraphics[width=6cm]{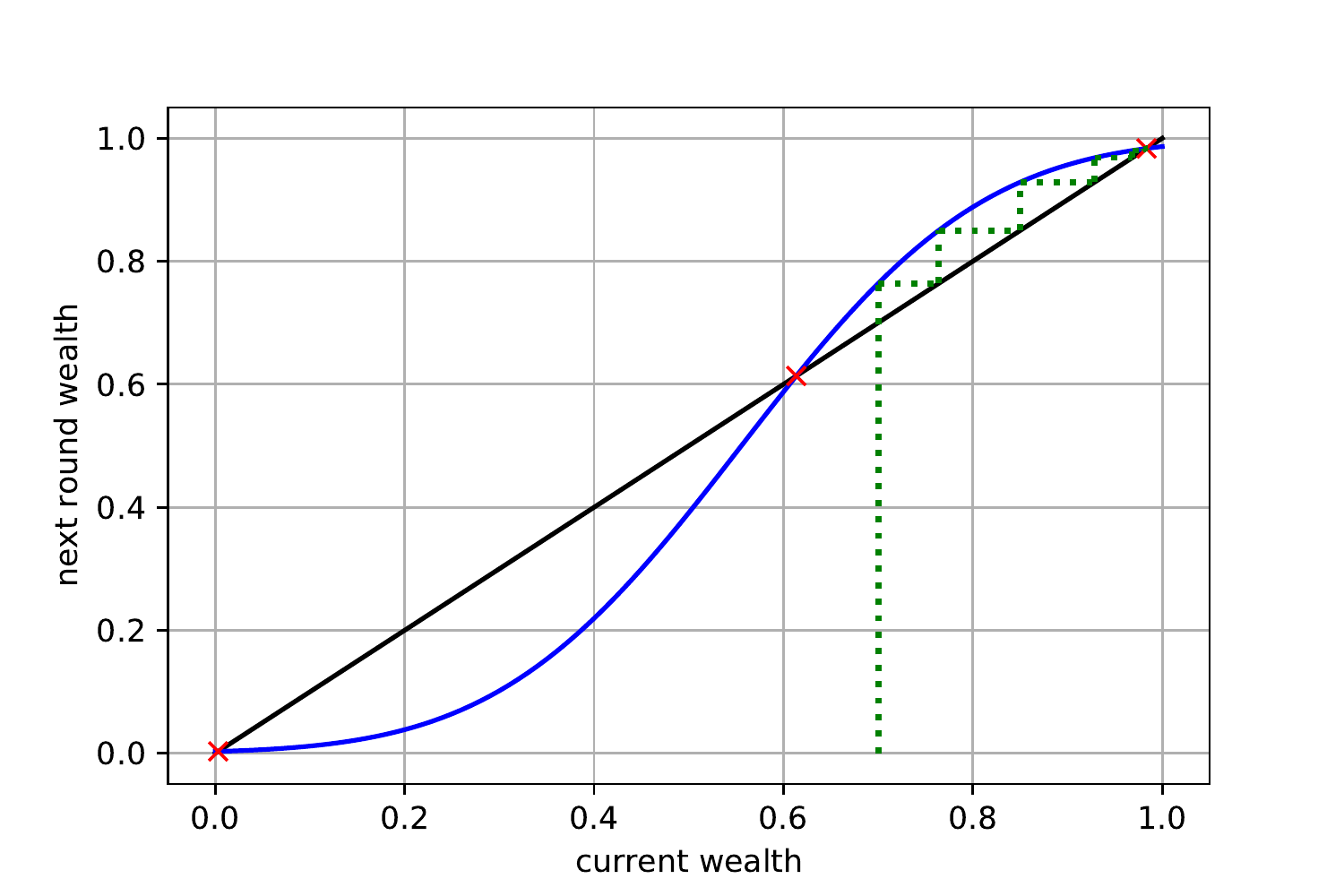} }}%
    \caption{A wealth update function with 3 fixed points, for $\alpha = 0.1, \beta = 0.95, \gamma = 1.4, \sigma =  1.1, \tau = 0.5$. the update function is plotted in blue, its fixed points in red, and the wealth dynamics induced by the update function in green. Sub-figure (a) considers dynamics starting at an initial wealth of $0.5$ while sub-figure (b) considers dynamics starting at an initial wealth of $0.7$.}
    \label{fig: 3 FP with cobweb}%
\end{figure}

We note that such situations can only arise in the regime in which $0 < \thr < (1-\alpha)$. In particular: 
\begin{claim}\label{clm:condition_3fp}
Suppose $K(\alpha,\beta,\stdt,\stdw) > \frac{\sqrt{2 \pi}}{1 - \alpha}$ and $\thr = \frac{1-\alpha}{2}$. Then $f$ has $3$ fixed points.
\end{claim}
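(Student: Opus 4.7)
The plan is to first exhibit $x = 1/2$ as a fixed point, then use the derivative condition at $1/2$ together with the boundary behavior to produce two additional fixed points by the intermediate value theorem, and finally invoke Lemma~\ref{lem:fixed_pts} to conclude there are exactly three.

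First I would substitute $\thr = (1-\alpha)/2$ and $x = 1/2$ into the update rule to get
\[
f(1/2) = 1 - \Phi\left(K\left(\tfrac{1-\alpha}{2} - (1-\alpha)\cdot \tfrac{1}{2}\right)\right) = 1 - \Phi(0) = \tfrac{1}{2},
\]
so $1/2$ is a fixed point. Next, I would compute $f'$ using the expression from Claim~\ref{clm: single_fp_condition},
\[
f'(x) = \frac{K(1-\alpha)}{\sqrt{2\pi}}\,\exp\!\left(-K^2(\thr - (1-\alpha)x)^2/2\right),
\]
and evaluate at $x = 1/2$: the exponent vanishes, giving $f'(1/2) = K(1-\alpha)/\sqrt{2\pi}$. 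The hypothesis $K > \sqrt{2\pi}/(1-\alpha)$ is exactly $f'(1/2) > 1$.

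The key step is then the sign analysis of $g(x) \triangleq f(x) - x$ near $1/2$. Since $g(1/2) = 0$ and $g'(1/2) = f'(1/2) - 1 > 0$, there exist $\delta_1, \delta_2 > 0$ with $g(1/2 - \delta_1) < 0$ and $g(1/2 + \delta_2) > 0$. Combined with $g(0) = f(0) > 0$ (since $1-\Phi > 0$) and $g(1) = f(1) - 1 < 0$ (since $\Phi > 0$), the intermediate value theorem applied to $g$ on $[0, 1/2 - \delta_1]$ and on $[1/2 + \delta_2, 1]$ yields fixed points $z_1 \in (0, 1/2)$ and $z_3 \in (1/2, 1)$. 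This gives at least three fixed points $z_1 < 1/2 < z_3$.

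Finally, since $\thr = (1-\alpha)/2$ satisfies $0 < \thr < 1 - \alpha$, Lemma~\ref{lem:fixed_pts} guarantees at most three fixed points, so there are exactly three. The only nontrivial step is the derivative computation and the local sign argument at $1/2$; the rest is routine. No additional regularity of $f$ is needed beyond continuity, which holds by Claim~\ref{clm: f_shape}.
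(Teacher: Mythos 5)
Your proposal is correct and follows essentially the same route as the paper's proof: exhibit $1/2$ as a fixed point, use $f'(1/2) = K(1-\alpha)/\sqrt{2\pi} > 1$ to get the local sign change of $f(x)-x$ around $1/2$, and apply the intermediate value theorem on each side using $f(0)>0$ and $f(1)<1$. Your additional appeal to Lemma~\ref{lem:fixed_pts} to pin the count at exactly three is a harmless refinement the paper leaves implicit.
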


\begin{proof}
In this case, note that $x^* = \frac{\thr}{1-\alpha} = \frac{1}{2}$. Further, we know that 
\begin{align*}
f(x^*) 
&= 1 - \Phi\left(K\left(\alpha,\beta,\stdt,\stdw\right) \cdot \left(\thr - (1 - \alpha) x^*\right)\right) 
\\&=  1 - \Phi(0) 
\\&= 1/2,
\end{align*}
implying that $x^* = 1/2$ is a fixed point of $f$. Further, 
\begin{align*}
f'(x^*) = \frac{K (1 - \alpha)}{\sqrt{2 \pi}} > 1,
\end{align*}
hence $f(x) < x$ in a small neighborhood $(x^*-\varepsilon,x^*)$ and $f(x) > x$ in a small neighborhood $(x^*,x^*+\varepsilon)$. By continuity of $x$ and the fact that $f(0) > 0$, $f(x) = x$ must have a solution on $[0,x^*)$. Similarly, since $f(1) < 1$, $f(x) = x$ must have a solution on $(x^*,1]$.
\end{proof}

Note that because $f$ is continuous in $\thr$, $f$ must have three fixed points for any $\thr$ in a neighborhood of $ \frac{1-\alpha}{2}$. I.e., there exists a continuous range of values of $\thr$ for which $f$ has three fixed points, showing that such situations are not a corner case of our framework, unlike when $f$ has two fixed points. 

\begin{remark}
There is a gap between the conditions given in Claim~\ref{clm: single_fp_condition} and Lemma~\ref{lem:fixed_pts} under which a single fixed point arises, and the condition given in Claim~\ref{clm:condition_3fp}. In particular, we remark that even if $f$ is not a contraction mapping and $K > \frac{\sqrt{2 \pi}}{1 - \alpha}$, or when $0 \leq \tau \leq 1 - \alpha$, it may still have only a single fixed point. To investigate how often $3$ fixed points can arise, we picked a uniform grid of parameter values $(\alpha,\beta,\gamma,\sigma,\thr) \in [0,1]^5$ and investigated what fraction of the parameters that satisfy either $0 \leq \tau \leq 1 - \alpha$ or $K > \frac{\sqrt{2 \pi}}{1 - \alpha}$ actually lead to an update rule with three fixed points. We found that this was the case for roughly $30$ percent of the values we explored, implying the existence of a significant range of parameters for which there are disparities in the long term wealth of different populations. 
\end{remark}

\label{sec:wealth_dynamics}

\section{Interventions To Improve Long Term Population Wealth}\label{sec:interventions}

In this section, we consider different types of intervention aiming at improving and equalizing population wealth, when the wealth dynamics have multiple fixed points (and so are not necessarily self correcting). We consider three types of interventions: i) changing the design of the admission rule used by the university, ii) changing the design of the standardized test or scoring rule that the university relies on, and iii) providing subsidies to disadvantaged groups. 

\subsection{Changing the Admission Rule}

Since in our model, it is admission to university that confers a wealth advantage to the next generation, a natural intervention is to increase the capacity of the university, thereby admitting more people. Rather than changing the objective function of the university $(\alpha)$, we model this kind of intervention by decreasing the university's admissions threshold $\tau$. We note that although we do not explicitly quantify it, increasing the capacity of a university will come at some financial cost, and so this kind of intervention is not necessarily incomparable to the direct subsidies we consider later.

The claim below characterizes how the fixed points of $f$ change when we change the value of $\thr$. For the sake of notation, we let $f(.,\thr)$ be the update rule when the chosen threshold is $\thr$, and omit the dependencies of $f$ in the other parameters of the problem.

\begin{theorem}
\label{thm:threshold}
Fix $\alpha,\beta,\stdt,\stdw$. For a given admission threshold $\thr$, let $z_1(\thr) < z_2(\thr) < z_3(\thr)$ be the fixed points of $f$ when all $3$ exist. Let $\thr' < \thr$ be such that $f(.,\thr')$ has $3$ fixed points, we have
\[
z_1(\thr') > z_1(\thr)~~\text{and}~~ z_3(\thr') > z_3(\thr), 
\]
but
\[
z_2(\thr') < z_2(\thr).
\]
\end{theorem}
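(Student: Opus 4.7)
The plan is to reparameterize the fixed-point equation as a univariate equation in $x$ whose parameter is $\thr$, so that monotonicity of each fixed point in $\thr$ can be read off by inverting an explicit function. Starting from $f(x,\thr) = x$ and using the closed form in Lemma~\ref{lem:update_rule}, the fixed-point condition can be rewritten as $\Phi^{-1}(1-x) = K(\thr - (1-\alpha)x)$, which solves uniquely for $\thr$ to give
\[
\thr \;=\; \psi(x) \;:=\; \frac{1}{K}\,\Phi^{-1}(1-x) + (1-\alpha)\,x.
\]
The set of fixed points of $f(\cdot,\thr)$ is thus exactly $\psi^{-1}(\{\thr\}) \cap (0,1)$, and understanding how the fixed points shift when $\thr$ changes reduces to understanding the monotone branches of $\psi$.

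The next step is to analyze the shape of $\psi$ on $(0,1)$. Differentiating yields $\psi'(x) = (1-\alpha) - 1/\bigl(K\,\phi(\Phi^{-1}(1-x))\bigr)$, where $\phi$ is the standard Gaussian density. This derivative tends to $-\infty$ at both endpoints and equals $(1-\alpha) - \sqrt{2\pi}/K$ at $x=1/2$; in particular it is positive at $x=1/2$ precisely when $K(1-\alpha) > \sqrt{2\pi}$, which is the complement of the contraction condition from Claim~\ref{clm: single_fp_condition} and is forced by the hypothesis that $f(\cdot,\thr)$ has three fixed points. In that regime, unimodality of $\phi$ implies that $\psi'$ has exactly two zeros $x_1 < 1/2 < x_2$, so $\psi$ is decreasing on $(0,x_1)$, increasing on $(x_1,x_2)$, and decreasing on $(x_2,1)$, with $\psi(0^+) = +\infty$ and $\psi(1^-) = -\infty$. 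Consequently the three fixed points live on three separate monotone branches: $z_1(\thr) \in (0,x_1)$, $z_2(\thr) \in (x_1,x_2)$, and $z_3(\thr) \in (x_2,1)$.

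The conclusion then follows from the inverse function theorem applied branchwise: on each branch $z_i(\thr)$ is a smooth function of $\thr$ with derivative $1/\psi'(z_i(\thr))$, which is negative on the two decreasing branches hosting $z_1$ and $z_3$ and positive on the increasing branch hosting $z_2$. Decreasing $\thr$ to $\thr'$ therefore raises $z_1$ and $z_3$ and lowers $z_2$, yielding the three inequalities in the theorem. The main subtlety I would anticipate is promoting the local IFT monotonicity to monotonicity across the entire interval $[\thr',\thr]$; the $\psi$-reparameterization handles this cleanly because each branch of $\psi$ is a fixed connected subinterval of $(0,1)$ independent of $\thr$, so each $z_i$ is a single globally defined smooth strictly monotone function on the full range of admissible thresholds, and no merging or disappearance of fixed points can occur along the deformation from $\thr$ to $\thr'$.
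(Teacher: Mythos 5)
Your proof is correct, but it takes a genuinely different route from the paper's. The paper works directly with $f$: it observes that $f(x,\thr)$ is decreasing in $\thr$, so lowering the threshold pushes the graph of $f$ up, and then uses intermediate-value arguments (plus, for $z_2$, a short contradiction showing $z_1(\thr') < z_2(\thr)$ via the bounds $z_1 \leq \thr/(1-\alpha) \leq z_3$) to locate the new fixed points relative to the old ones. You instead invert the fixed-point equation to get $\thr = \psi(x) = \tfrac{1}{K}\Phi^{-1}(1-x) + (1-\alpha)x$ and classify the monotone branches of $\psi$; your computation of $\psi'$, the use of the negation of the contraction condition from Claim~\ref{clm: single_fp_condition} to get $\psi'(1/2) > 0$, and the unimodality argument giving exactly two critical points $x_1 < 1/2 < x_2$ are all sound, and since $x_1, x_2$ depend only on $K$ and $\alpha$ (not on $\thr$), each $z_i$ is a globally defined strict monotone inverse of a fixed branch, which cleanly delivers all three inequalities at once. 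Your approach buys more: it shows each $z_i(\thr)$ is smooth and strictly monotone, identifies the exact window $(\psi(x_1), \psi(x_2))$ of thresholds admitting three fixed points, and avoids the paper's separate case analysis for $z_2$. The paper's argument is more elementary (no $\Phi^{-1}$, no inverse function theorem) and is the style that transfers to update rules without a closed form, but for this specific Gaussian $f$ your reparameterization is arguably the cleaner proof. The only point worth making explicit in a write-up is the boundary case: three distinct fixed points force $\psi(x_1) < \thr < \psi(x_2)$ strictly (equality would give a tangency with only two fixed points), which is what guarantees one root strictly interior to each branch.
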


\begin{proof}[Proof Sketch]
The proof follows simply by showing that the update function is decreasing in $\thr$. In turn, decreasing $\thr$ moves the update rule $f$ ``up'', which increases attracting fixed points and decreases unstable ones. The full proof is given in Appendix~\ref{app:threshold}.
\end{proof}

In interpreting Theorem \ref{thm:threshold}, we recall that only the first and third fixed points $z_1$ and $z_3$ are attracting, and that $z_2$ is unstable (has no points $x$ for which it is attracting for $x \neq z_2$). Hence, if we are in a situation in which there is persistent wealth inequality (multiple fixed points), we find that if we can decrease the admissions threshold $\tau$ of the university, then \emph{either}:
\begin{enumerate}
\item We increase the wealth of both of the attracting fixed points (and hence the wealth of both populations, irrespective of which attractive fixed point they are at). By decreasing $z_2$ we also reduce the size of the attracting region $[0, z_2)$ of the lower wealth fixed point, thus enabling poorer populations to converge to the most desirable fixed point. 
Or 
\item We move the dynamics to one in which there is only a single fixed point, and hence eliminate wealth inequality.
\end{enumerate}



\subsection{Changing the design of the scoring rule $S$}

The college has to engage in inference about an applicant's type and wealth when $\beta \neq \alpha$, because the signal it receives does not align with its objective function. What if we can modify the signal (by e.g. changing the design of a standardized test) to more closely align the signal with the college's objective? 

In this section we characterize how the fixed points of $f$ change when we change the value of $\beta$. We denote $f(.,\beta)$ the update rule when the scoring rule uses parameter $\beta$, while the other parameters of the problem remain fixed.

\begin{theorem}
\label{thm:beta}
Fix $\alpha,\tau,\stdt,\stdw$. For a given $\beta$, let $z_1(\beta) < z_2(\beta) < z_3(\beta)$ be the fixed points of $f$ when all $3$ exist. Suppose $\beta < \alpha$ and let $\beta' \in (\beta,\alpha)$ be such that $f(.,\beta')$ has $3$ fixed points, then 
\[
z_1(\beta') > z_1(\beta)~~\text{and}~~ z_3(\beta') < z_3(\beta).
\]
Similarly, if $\beta > \alpha$ and $\beta' \in (\alpha,\beta)$, we have 
\[
z_1(\beta') > z_1(\beta)~~\text{and}~~ z_3(\beta') < z_3(\beta).
\]
\end{theorem}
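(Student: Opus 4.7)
The plan is to mirror the structure of Theorem~\ref{thm:threshold}: first observe that $\beta$ enters the update rule $f$ from Lemma~\ref{lem:update_rule} only through the coefficient $K(\alpha,\beta,\stdt,\stdw)$; then show that moving $\beta$ toward $\alpha$ strictly decreases $K$; and finally argue that decreasing $K$ ``compresses'' the graph of $f$ toward the pivot $x^* = \thr/(1-\alpha)$, pushing the extreme fixed points $z_1$ and $z_3$ toward $x^*$. The first observation is immediate from the formula $f(x) = 1 - \Phi\bigl(K(\thr - (1-\alpha)x)\bigr)$, so it suffices to analyze $K$ as a function of $\beta$ and then $f$ as a function of $K$.

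For monotonicity of $K$ in $\beta$, I would use the identity $K = \sqrt{Var(S)}/Cov(D,S)$ noted after Claim~\ref{clm: single_fp_condition}, with $D = \alpha T + (1-\alpha)W$. Rewriting $1/K = Corr(D,S)\sqrt{Var(D)}$ and noting that $Var(D)$ is independent of $\beta$, while $S$ is proportional to $D$ exactly when $\beta = \alpha$, Cauchy--Schwarz already shows that $K$ is minimized at $\beta = \alpha$. To upgrade this to strict monotonicity on each of $[0,\alpha]$ and $[\alpha,1]$, I would differentiate $1/K^2 = Cov(D,S)^2/Var(S)$ with respect to $\beta$; with $A = Cov(D,S)$ and $B = Var(S)$, the numerator $2 B A' - A B'$ of the derivative reduces after expansion and cancellation to a positive multiple of $(\alpha - \beta)$, so $d(1/K^2)/d\beta$ has the same sign as $\alpha - \beta$. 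Hence moving $\beta$ strictly closer to $\alpha$ strictly decreases $K$.

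For the effect on fixed points, I would fix $\beta$ and $\beta'$ (both yielding three fixed points), let $K = K(\beta)$ and $K' = K(\beta')$ with $K' < K$, and write $f(x; K)$ for the update rule viewed as a function of $K$. Differentiating, $\partial f/\partial K = -\phi\bigl(K(\thr - (1-\alpha)x)\bigr)(\thr - (1-\alpha)x)$, whose sign is opposite to that of $\thr - (1-\alpha)x$. So decreasing $K$ strictly raises $f(x)$ on $[0,x^*)$ and strictly lowers it on $(x^*,1]$. Lemma~\ref{lem:fixed_pts} gives $z_1(\beta) < x^* < z_3(\beta)$, so $f(z_1(\beta); K') > z_1(\beta)$ and $f(z_3(\beta); K') < z_3(\beta)$. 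Combining these strict inequalities with the usual sign pattern of $g(x;K') := f(x;K') - x$ (positive on $[0,z_1(\beta'))$, negative on $(z_1(\beta'), z_2(\beta'))$, positive on $(z_2(\beta'), z_3(\beta'))$, negative on $(z_3(\beta'), 1]$) forces $z_1(\beta') > z_1(\beta)$ and $z_3(\beta') < z_3(\beta)$, exactly as claimed.

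The main obstacle is the algebra in the second step: the numerator of $d(1/K^2)/d\beta$ begins as a quartic in $\beta$, and reducing it to the clean form $2\stdt^2\stdw^2(\alpha - \beta)$ relies on several non-obvious cancellations. If that direct calculation proves too unwieldy, a conceptually cleaner route is to argue from the Cauchy--Schwarz equality case together with a unimodality argument: $Corr(D,S)^2$ is a smooth function of $\beta$ attaining its unique maximum of $1$ at $\beta = \alpha$, and its explicit rational form can be checked to have no other interior critical point on $(0,1)$, so it must be strictly monotone on each of $[0,\alpha]$ and $[\alpha,1]$.
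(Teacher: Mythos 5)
Your proposal is correct and takes essentially the same route as the paper: the paper's proof directly computes $\partial f/\partial\beta$ and finds it proportional to $(\alpha-\beta)\,(\thr-(1-\alpha)x)\,\stdt^2\stdw^2$, which is precisely your chain-rule factorization through $K$ --- your cancellation $2A'B-AB'=2\stdt^2\stdw^2(\alpha-\beta)$ is the same algebra that produces the paper's numerator, and the subsequent monotone-perturbation argument for the fixed points is identical. One small step to tighten: knowing only $g(z_1(\beta);K')>0$ together with the sign pattern of $g(\cdot;K')$ would still permit $z_1(\beta)\in(z_2(\beta'),z_3(\beta'))$, so you should invoke the inequality on the whole interval --- $f(x,\beta')>f(x,\beta)\ge x$ for all $x\in[0,z_1(\beta)]$ (and $f(x,\beta')<x$ for all $x\in[z_3(\beta),1]$), which you have already established --- to conclude that $f(\cdot,\beta')$ has no fixed point there, as the paper does.
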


\begin{proof}[Proof Sketch]
The first part of the proof follows simply by showing that the update function is increasing in $\beta$ for $x < \tau/(1-\alpha)$ (where the first fixed point lies) and decreasing in $\beta$ as for $x > \tau/(1-\alpha)$ (where the third fixed point lies). In turn, increasing $\beta < \alpha$ towards $\alpha$ move the update rule $f$ ``up'' around the first fixed point and ``down'' around the third fixed point, which increases $z_1$ and decreases $z_3$. A similar argument holds for $\beta > \alpha$. The full proof is given in Appendix~\ref{app:beta}. 
\end{proof}

Intuitively, one might suppose that to reduce wealth disparities, we should redesign tests so as to make them reflect type more strongly and wealth less strongly (since types are distributed identically across groups). But Theorem \ref{thm:beta} shows that counter-intuitively, this need not be the case\footnote{Even when $\beta \to 1$, $f$ may have three fixed points: by Claim~\ref{clm:condition_3fp}, this arises for example when $K(\alpha,1,\stdt,\stdw) = \frac{1}{\alpha \stdt} > \frac{1 - \alpha}{\sqrt{2 \pi}}$ and $\tau = \frac{1 - \alpha}{2}$. In this case, setting $\beta = \alpha$ surprisingly leads to better outcomes than $\beta = 1$.}. Instead, what Theorem \ref{thm:beta} shows is that in order to reduce inequality, we want to move $\beta$ towards $\alpha$, causing the signal to better reflect the objective function of the college ---- even when this results in reducing the extent to which the signal reflects type\footnote{Of course, if we can, we would prefer to increase the extent to which the college \emph{values} type rather than wealth, but to the extent that we cannot do this, then we want to align the test with the college's objective.}. Theorem \ref{thm:beta} shows that moving $\beta$ towards $\alpha$ always has the effect of reducing wealth disparities. It either:
\begin{enumerate}
    \item Increases the wealth of the less wealthy attracting fixed point, and decreases the wealth of the more wealthy attracting fixed point, thereby decreasing the long term wealth disparity, or it
    \item Shifts the dynamic to one that has only a single fixed point, thereby eliminating long term wealth disparities. 
\end{enumerate}


\subsection{Direct Subsidies}\label{sec: interv_subsidies}
We have thus far considered interventions that can be applied by the college (admitting more students) or a testing body (changing the design of the signal). In this section, we take the point of view of a funding body or governmental agency that can provide direct monetary subsidies to populations. We generalize the class of functions we study to include any function $f$ satisfying the following properties:

\begin{assumption}
$f$ is continuous and increasing. Further, $f$ has three fixed points $z_1 < z_2 < z_3$ with $f(x) > x$ on $[0,z_1]$ and $[z_2,z_3]$ and $f(x) < x$ on $[z_1,z_2]$ and $[z_3,1]$. 
\end{assumption}

The above assumption captures the main properties of our function $f = 1 - \Phi\left(K\left(\alpha, \beta, \stdt, \stdw\right) \left(\thr-(1-\alpha) x\right) \right)$ when it has three fixed points and implies the same attracting properties we established for $z_1$, $z_2$, and $z_3$, but also encompasses more general update rules that need not result from the Gaussian inference process we have studied thus far. We note that this allows us to study general S-shaped function with diminishing returns at both ends of the socio-economic spectrum. Such functions model situation in which people of very low income or very high income see little upward mobility (in the first case because of a lack of access to opportunities, and in the second case due to the fact that individuals of higher income are rare), whereas middle income individuals have significant opportunities to improve their wealth.

We denote by $C(\mu,t)$ the subsidy given to a population with wealth $\mu$ at time step $t$. The wealth of a population $t+1$ then depends of the wealth in time $t$ as
\[
\mu^{t+1} = f\left(\mu^t + C(\mu^t,t)\right).
\]
In this setting, we consider interventions that allow a population to reach beyond the second fixed point $z_2$. Once a population reaches wealth (even slightly) over $z_2$, their wealth naturally evolves to the highest attracting fixed point $z_3$ over time; i.e., wealth dynamics self-correct for disparities with no intervention needed. For the same reason, we only consider $\mu^0 \in [z_1,z_2]$; this is because populations with $\mu^0 < z_1$ will converge to $z_1$ without intervention, and we can start intervening once $\mu^0$ reaches $z_1$, while a population with $\mu^0 > z_2$ will reach the best long-term outcome (the highest fixed point, $z_3$) on its own. Therefore, from now on, we assume $C(\mu) = 0$ for all $\mu \notin [z_1,z_2]$. We can now formulate our centralized designer's objective, which is to minimize the following loss function: 
\[
L(C) = \lambda \sum_{t = 0}^{T(C) - 1} \discnt^t C(\mu^t,t) + (1 - \lambda) \sum_{t = 0}^{T(C) - 1} \discnt^{t} (z_2 - \mu^{t}),
\]
where $\discnt, \lambda \in [0,1)$, and $T(C) = \min \{t~~\text{s.t.}~\mu^t \geq z_2\}$ is the first time step such that $\mu^t \geq z_2$. Here $\discnt$ is a discounting factor; the lower $\discnt$ is, the less the designer cares about future as opposed to immediate outcomes. The objective is a convex combination of two terms, with weights controlled by $\lambda$. The first term consists of the discounted monetary cost of the subsidies (the sum goes up to time $T(C)-1$, since after the wealth of the population crosses $z_2$, the subsidies cease. This term represents a preference to spend less money on direct subsidies. The second term consists of the sum discounted difference between the target wealth $z_2$ that the intervention is aiming at, and the wealth of the population at the current round. This term represents a preference to quickly increase the wealth of the lower wealth population. $\lambda$ represents the relative strength of these two preferences.

Note that $z_2 - \mu^0$ is a constant term that does not depend on the designer's interventions, hence we will equivalently aim to minimize 
\[
L(C) = \lambda \sum_{t = 0}^{T(C) - 1} \discnt^t C(\mu^t,t) + (1 - \lambda) \sum_{t = 1}^{T(C) - 1} \discnt^{t} (z_2 - \mu^{t}),
\]
where we drop the discounted difference between $z_2$ and the initial wealth $\mu^0$ at $t = 0$.


\paragraph{Algorithmically finding a near-optimal subsidy function $C(.)$} We note that in our setting, one may discretize the space of possible costs and use dynamic programming to find optimal interventions from each possible starting point. However, doing so requires carefully understanding the wealth update function $f$. In practice, detailed knowledge of $f$ will be hard to come by. For this reason, in the rest of this section, we will aim for a ``detail free'' solution and consider a simple class of constant subsidies and study how they can be applied with minimal information about the wealth update $f$.

\paragraph{Constant Subsidies} In the rest of this section, we consider the case in which $C(\mu)$ is constant in $\mu$ for $z_1 \leq \mu \leq z_2$. I.e. there exists $C \in [0,1]$ such that $C(\mu) = C$ for all $\mu \in [z_1,z_2]$ and $C(\mu) = 0$ otherwise.  We call these $C$-subsidy interventions. We qualify a $C$-subsidy intervention as a $k$-shot intervention if it takes $k$ time steps under the subsidy to reach wealth (at least) $z_2$ when starting at wealth $z_1$, i.e. if $T(C) = k$. \juba{changed} Note that different values of $C$ may lead to the same number of steps $k$ such that $\mu^{k} \geq z_2$, i.e. there may be several values of $C$ that qualify as a $k$-shot intervention for a given value of $k$. 
\juba{please make sure I deleted all references to ``infinitesimal'' (it's not always infinitesimal) in the below.}

Our aim is to give guidelines on how to choose $C$ while using minimal information about the function $f$. Here, we will encode this minimal information as a single, real parameter $\Delta$, defined as 
\begin{align}\label{Delta_def}
\Delta = \max_{x \in [z_1,z_2]}~~x - f(x).
\end{align}
Intuitively, $\Delta$ measures how difficult it is for a subsidy to have an effect on wealth that propogates in the next round. When $\Delta \to 0$, we have that $f(x) \to x$ on $x \in [z_1,z_2]$, and investing a subsidy of $C$ increases the population wealth by $C$, since $f(\mu^t + C) \to \mu^t + C$. However, we have that for at least one value of $\mu^t$, $f(\mu^t + C) = \mu^t + C - \Delta$, implying that when $\Delta$ is large, a large amount of the subsidy is lost in the next round, and so its overall effect is small. If we want to guarantee that our subsidies will eventually lift the lower wealth population to the higher wealth fixed point independently of its starting point, we need to consider subsidies in which $C > \Delta$. 

\begin{claim}\label{clm:min_intervention}
Suppose $C \leq \Delta$. Then there exists a starting wealth $\mu^0 \in [z_1,z_2)$ such that $\mu^t < z_2$ for all $t$; i.e., $\mu_t$ never reaches $z_2$. On the other hand, if $C > \Delta$, there exists $t$ such that $\mu^t \geq z_2$.
\end{claim}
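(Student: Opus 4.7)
Both directions of the claim follow from the monotonicity of $f$ combined with the extremal characterization $\Delta = \max_{x \in [z_1, z_2]}(x - f(x))$.

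\emph{If $C > \Delta$.} I would show that for every $\mu^0 \in [z_1, z_2)$, the orbit reaches or crosses $z_2$ in at most $\lceil (z_2 - z_1)/(C-\Delta)\rceil$ steps. Because $f$ is increasing with $f(z_1) = z_1$, we have $\mu^{t+1} = f(\mu^t + C) \geq f(z_1) = z_1$ whenever $\mu^t \geq z_1$, so an inductive argument shows the subsidy is applied at every step. By definition of $\Delta$, $f(x) \geq x - \Delta$ on $[z_1, z_2]$, so whenever $\mu^t + C \leq z_2$, $\mu^{t+1} \geq (\mu^t + C) - \Delta = \mu^t + (C - \Delta)$, a strictly positive increment. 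Iterating this bound over at most $\lceil (z_2 - z_1)/(C - \Delta)\rceil$ rounds forces either $\mu^t \geq z_2$ directly, or $\mu^t + C > z_2$, in which case $\mu^{t+1} \geq f(z_2) = z_2$ by monotonicity.

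\emph{If $C \leq \Delta$.} I would construct a starting wealth $\mu^0$ that is a fixed point of the subsidy dynamics. Let $x^* \in \arg\max_{x \in [z_1, z_2]}(x - f(x))$, so $f(x^*) = x^* - \Delta$; note $x^* \in (z_1, z_2)$ since $x - f(x) = 0$ at both endpoints. Because $f$ is increasing, $f(x^*) \geq f(z_1) = z_1$, which gives $x^* \geq z_1 + \Delta \geq z_1 + C$; hence $x^* - C \in [z_1, z_2)$. Now define $g(x) = f(x + C) - x$ on $[z_1, x^* - C]$. Monotonicity gives $g(z_1) = f(z_1 + C) - z_1 \geq 0$, while $g(x^* - C) = f(x^*) - (x^* - C) = C - \Delta \leq 0$. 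By continuity, $g$ has a zero $\mu^* \in [z_1, x^* - C]$, i.e., $f(\mu^* + C) = \mu^*$. Setting $\mu^0 = \mu^*$ then gives $\mu^t = \mu^*$ for all $t$ (since $\mu^* \geq z_1$ keeps the subsidy active throughout), and $\mu^* \leq x^* - C < z_2$.

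\emph{Main obstacle.} The delicate step is establishing that the fixed-point candidate $x^* - C$ actually lies at or above $z_1$, so it sits in the subsidized region. This reduces to $x^* \geq z_1 + \Delta$, which in turn follows from $f$ being increasing with $f(z_1) = z_1$ (so $f \geq z_1$ on $[z_1, z_2]$, which caps $\Delta$ by $x^* - z_1$). Without this structural property, the fixed point of the augmented map $x \mapsto f(x + C)$ could fall below $z_1$, where the subsidy turns off, and one would have to analyze the subsidy-free dynamics on $[0, z_1]$ separately via the assumption $f(x) > x$ there. The second direction's increment bound $\mu^{t+1} - \mu^t \geq C - \Delta$ is the other moving part, and it is tight exactly at $x = x^*$, which is why the threshold $C = \Delta$ is the right cutoff.
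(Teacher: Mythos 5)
Your proof is correct, and the second direction ($C > \Delta$) is essentially the paper's argument, stated slightly more carefully: the paper simply asserts $\mu^{t+1} \geq \mu^t + C - \Delta$ for all $t$, whereas you correctly flag that the bound $f(x) \geq x - \Delta$ is only guaranteed on $[z_1,z_2]$ and handle the overshoot case $\mu^t + C > z_2$ via monotonicity. For the first direction ($C \leq \Delta$) your route is genuinely different in mechanism, though it hinges on the same quantity: both arguments start from a maximizer $x^*$ of $x - f(x)$ and the observation $x^* - \Delta = f(x^*) > z_1$. The paper shows that the interval $[z_1, x^* - \Delta)$ is forward-invariant under the subsidized dynamics --- if $\mu^t < x^* - \Delta$ then $\mu^{t+1} = f(\mu^t + C) < f(x^* - \Delta + C) \leq f(x^*) = x^* - \Delta$ --- so \emph{every} starting wealth in that interval is trapped below $z_2$. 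You instead run the intermediate value theorem on $g(x) = f(x+C) - x$ over $[z_1, x^*-C]$ to exhibit an exact fixed point $\mu^*$ of the subsidized map and start the orbit there. Your version buys a cleaner witness (a genuinely stationary orbit) and makes explicit the check that the fixed point stays in the region where the subsidy is active; the paper's version buys more information, namely a whole interval of initial conditions that never escape, which better supports the informal reading that a subsidy $C \leq \Delta$ ``fails'' rather than merely failing at one carefully chosen point. Both are complete; no gaps.
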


\begin{proof}
\juba{check carefully, there are a few subtleties.}
Let $x_\Delta \in (z_1,z_2)$ be any value of $x$ such that $f(x) = x - \Delta$ (note that $x_\Delta \neq z_1,~z_2$ where $f(x) - x = 0$, since $f(x) < x$ on $[z_1,z_2]$ if $f$ has three fixed points). Suppose $\mu^t < x_\Delta - \Delta$ and $C \leq \Delta$, then $\mu^{t+1} = f \left(\mu_t + C\right) < f \left(x_\Delta - \Delta + C\right) \leq f(x_\Delta) = x_\Delta - \Delta$. I.e., $\mu_t < x_\Delta - \Delta < z_2$ for all $t$ so long as $\mu^0 \in [z_1,x_\Delta - \Delta)$; note that the interval is not empty as $x_\Delta - \Delta = f(x_\Delta) > z_1$. For the second part of the proof, note that by definition of $\Delta$, for all $t$, $\mu^{t+1} = f\left(\mu^t + C\right) \geq \mu^t + C - \Delta$, hence the group wealth increases by at least a constant amount $C - \Delta$ at each time step. 
\end{proof}

Intuitively, this holds because if $C$ is smaller than $\Delta$, it becomes insufficient to compensate the fact that the wealth of a group can decrease by an amount up to $\Delta$ at each round. In the rest of this section, we aim to understand how different interventions for different values of $C$ compare to each other, and when to choose low-cost versus high-cost interventions. Before doing so, we note that there is always a single, optimal $1$-shot intervention among all such $1$-shot interventions:

\begin{fact}
The $1$-shot intervention with cost $C = z_2 - \mu^0$ has smaller cost  than any other $1$-shot intervention. This immediately follows from the fact that any $1$-shot intervention with cost $C$ has loss $\lambda C$, and that no intervention with $C < z_2 - \mu^0$ can reach $z_2$ in one shot, as $\mu^1 = f(\mu^0 + C) < f(z_2) = z_2$. 
\end{fact}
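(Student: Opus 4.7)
The plan is to unpack the two pieces asserted in the Fact: (i) that any $1$-shot intervention with constant cost $C$ contributes a loss of exactly $\lambda C$, and (ii) that any $1$-shot intervention must satisfy $C \geq z_2 - \mu^0$. Together these identify $C = z_2 - \mu^0$ as the minimizer over $1$-shot interventions.

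For (i), I would unfold the definition of $L$ in the specialization to constant subsidies with $T(C) = 1$. The first sum $\lambda \sum_{t=0}^{T(C)-1} \discnt^t C(\mu^t, t)$ reduces to the single term at $t=0$, contributing $\lambda C$, and the second sum $(1-\lambda)\sum_{t=1}^{T(C)-1} \discnt^t(z_2 - \mu^t)$ is empty because its upper limit is $T(C) - 1 = 0 < 1$. Hence $L(C) = \lambda C$ for every $1$-shot intervention, so minimizing loss among $1$-shot interventions is the same as minimizing $C$ subject to the $1$-shot condition.

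For (ii), I would use that $f$ is non-decreasing (from the assumption on $f$) together with the fixed point identity $f(z_2) = z_2$. Suppose $C < z_2 - \mu^0$, so that $\mu^0 + C < z_2$. Then $\mu^1 = f(\mu^0 + C) \leq f(z_2) = z_2$, and I would argue the inequality is in fact strict either by continuity of $f$ and $f(x) < x$ on $(z_1, z_2)$, or simply by noting that even $\mu^1 \leq z_2$ with equality only if $\mu^0 + C = z_2$ still forces $C \geq z_2 - \mu^0$. Either way, any $1$-shot intervention must have $C \geq z_2 - \mu^0$. Taking $C = z_2 - \mu^0$ yields $\mu^0 + C = z_2$ and thus $\mu^1 = f(z_2) = z_2$, confirming that this value is achievable and is therefore the unique minimizer of $\lambda C$ over the feasible set.

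There is no real obstacle here: the claim is essentially a tautology once the loss function is evaluated on the restricted class of $1$-shot constant subsidies. The only mild subtlety is the strictness of the inequality $\mu^1 < z_2$ when $C < z_2 - \mu^0$, which I would handle by invoking monotonicity of $f$ together with the assumption $\mu^0 \in [z_1, z_2]$ and $f(z_2) = z_2$. Since this is a short derivation, I would present it in essentially the single display the authors give, with perhaps one added sentence making explicit the vanishing of the second summation when $T(C) = 1$.
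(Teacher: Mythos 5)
Your proposal is correct and follows essentially the same route as the paper: the loss of any $1$-shot intervention collapses to $\lambda C$ because the second summation is empty when $T(C)=1$, and monotonicity of $f$ together with $f(z_2)=z_2$ forces $C \geq z_2 - \mu^0$. The extra care you take over the strictness of $\mu^1 < z_2$ and the achievability of $C = z_2 - \mu^0$ only makes explicit what the paper leaves implicit.
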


We now provide a sufficient condition under which the $1$-shot intervention is guaranteed to be optimal.

\begin{theorem}\label{thm: 1-shot_best}
Suppose $\discnt \geq \lambda$. Then, any $k$-shot intervention has higher loss than the $1$-shot, $\left(z_2 - \mu^0\right)$-subsidy intervention. I.e. the $\left(z_2 - \mu^0\right)$-subsidy intervention is optimal. 
\end{theorem}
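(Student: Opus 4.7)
The plan is to proceed by induction on $k$, showing that for every $k \geq 1$, every starting wealth $\mu^0 \in [z_1, z_2)$, and every constant subsidy $C$ for which the $k$-shot intervention is valid (i.e., $T(C) = k$ starting from $\mu^0$), the loss $L_{k, C, \mu^0}$ is at least $\lambda(z_2 - \mu^0)$, which is exactly the loss of the $1$-shot $(z_2 - \mu^0)$-subsidy intervention. The base case $k = 1$ is essentially the fact stated just before the theorem: validity forces $\mu^1 = f(\mu^0 + C) \geq z_2$, and monotonicity of $f$ together with $f(z_2) = z_2$ then gives $C \geq z_2 - \mu^0$, so $L_{1, C, \mu^0} = \lambda C \geq \lambda(z_2 - \mu^0)$.

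For the inductive step with $k \geq 2$, I would first derive the one-step recursion
\[
L_{k, C, \mu^0} \;=\; \lambda C \;+\; (1-\lambda)\discnt\,(z_2 - \mu^1) \;+\; \discnt\, L_{k-1, C, \mu^1}
\]
by peeling off the $t = 0$ cost and the $t = 1$ wealth deficit from their respective sums and re-indexing the remaining terms. The continuation is itself a valid $(k-1)$-shot $C$-subsidy intervention starting from $\mu^1 = f(\mu^0+C)$: monotonicity of $f$ with $f(z_1) = z_1$ gives $\mu^1 \geq z_1$, and $k \geq 2$ forces $\mu^1 < z_2$.

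Applying the inductive hypothesis $L_{k-1, C, \mu^1} \geq \lambda(z_2 - \mu^1)$ collapses the convex combination to $L_{k, C, \mu^0} \geq \lambda C + \discnt(z_2 - \mu^1)$. Because $\mu^0 + C < z_2$ (again since $\mu^1 < z_2$ and $f$ is increasing) and $\mu^0 + C \geq z_1$, Assumption~2 yields $\mu^1 = f(\mu^0 + C) \leq \mu^0 + C$, i.e., $z_2 - \mu^1 \geq z_2 - \mu^0 - C \geq 0$. Invoking $\discnt \geq \lambda$ together with $z_2 - \mu^1 \geq 0$, we obtain $\discnt(z_2 - \mu^1) \geq \lambda(z_2 - \mu^1) \geq \lambda(z_2 - \mu^0 - C)$, and hence
\[
L_{k, C, \mu^0} \;\geq\; \lambda C + \lambda(z_2 - \mu^0 - C) \;=\; \lambda(z_2 - \mu^0),
\]
closing the induction.

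The main obstacle is really just bookkeeping: organizing the loss sums so that the one-step recursion comes out cleanly, and verifying that $\mu^1$ and $\mu^0 + C$ both stay inside $[z_1, z_2)$ so that Assumption~2 and the inductive hypothesis can both be invoked. The hypothesis $\discnt \geq \lambda$ plays no role until the very last chain of inequalities, where it lets the discounted future wealth deficit absorb the ``savings'' from spreading a subsidy over multiple rounds---making precise the intuition that a sufficiently patient planner always prefers a single large upfront payment over a sequence of smaller ones.
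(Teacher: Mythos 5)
Your proof is correct and takes essentially the same approach as the paper's: induction on $k$ using the one-step recursion $L_{k,C,\mu^0} = \lambda C + (1-\lambda)\discnt(z_2-\mu^1) + \discnt L_{k-1,C,\mu^1}$, together with $f(x) \leq x$ on $[z_1,z_2]$ and the hypothesis $\discnt \geq \lambda$. The only cosmetic differences are that you anchor the induction at $k=1$ (the paper starts at $k=2$) and that you keep $\lambda C$ explicit in the final chain rather than first lower-bounding $C \geq \mu^1 - \mu^0$; these are algebraically equivalent.
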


\begin{proof}
The proof follows by induction on $k$. First, let us consider the base case when $k = 2$, and let $C$ be any cost that leads to convergence in two shots. Consider any starting point $\mu^0 \in [z_1,z_2]$. Note that the sequence of wealth $\mu^0 \to \mu^1 \to \mu^2$ must satisfy $\mu^1 < z_2$ and $\mu^2 \geq z_2$. Further, note that because $\mu^{t+1} = f\left(\mu^t + C \right) \leq \mu^t + C$ we must have $C \geq \mu^{t+1} - \mu^t$. We then have that the loss $L$ satisfies
\begin{align*}
L(C) & = \lambda C + (1-\lambda) \discnt (z_2 - \mu^1) + \discnt (\lambda C) \nonumber \\
& \geq \lambda (\mu^1 - \mu^0) + (1-\lambda) \discnt (z_2 - \mu^1) + \discnt [\lambda (z_2 - \mu^1)]\\ 
& =  \lambda (\mu^1 - \mu^0) + \discnt \left(z_2 - \mu^1\right)\\
& \geq \lambda (\mu^1 - \mu^0) + \lambda \left(z_2 - \mu^1\right) = \lambda(z_2 - \mu^0).
\end{align*}
This concludes the case of $k = 2$.

For $k > 2$, note that we have $\mu^{k-1} < z_2$ and $\mu^{k} \geq z_2$\juba{fixed}. Letting $C$ be any cost that leads to reaching $z_2$ in $k$ shots, we have that the loss function is given by
\begin{align*}
L(C) 
& \geq \lambda(\mu^1 - \mu^0) + (1-\lambda) \discnt (z_2 - \mu^1) + \left[\lambda \sum_{t = 1}^{k-1} \discnt^t C + (1 - \lambda) \sum_{t = 2}^{k-1} \discnt^t (z_2 - \mu^t)\right]
\\&= \lambda(\mu^1 - \mu^0) + (1-\lambda) \discnt (z_2 - \mu^1) + \discnt \left[\lambda \sum_{t = 1}^{k-1} \discnt^{t-1} C + (1 - \lambda) \sum_{t = 2}^{k-1} \discnt^{t-1} (z_2 - \mu^{t})\right]
\\&= \lambda(\mu^1 - \mu^0) + (1-\lambda) \discnt (z_2 - \mu^1) + \discnt \left[\lambda \sum_{t = 0}^{k-2} \discnt^t C + (1 - \lambda) \sum_{t = 1}^{k-2} \discnt^t (z_2 - \mu^{t+1})\right].
\end{align*}
The second term in the last line of the inequality is the loss function when starting at $\mu^1 \in [z_1,z_2]$ instead of $\mu^0$. Indeed, write $\nu^t = \mu^{t+1}$ the sequence that starts at $\mu^1$ and satisfies $\nu^k = \mu^{k-1} < z_2$ but $\nu^{k-1} = \mu^k \geq z_2$ (hence this new sequence converges in $k-1$ rather than $k$ steps); the loss of this sequence is given by:
\begin{align*}
&\lambda \sum_{t = 0}^{k-2} \discnt^t C + (1 - \lambda) \sum_{t = 1}^{k-2} \discnt^t (z_2 - \nu^t) = \lambda \sum_{t = 0}^{k-2} \discnt^t C + (1 - \lambda) \sum_{t = 1}^{k-2} \discnt^t (z_2 - \mu^{t+1}).
\end{align*}
By the induction hypothesis, since the cost of a one-shot intervention is lower than that of any $k-1$-shot intervention, we have that 
\[
\lambda \sum_{t = 0}^{k-2} \discnt^t C + (1 - \lambda) \sum_{t = 1}^{k-2} \discnt^t (z_2 - \mu^{t+1}) \geq \lambda \left(z_2 - \mu^1 \right).
\]
Therefore, $L(C)$ is lower bounded by
\begin{align*}
\lambda(\mu^1 - \mu^0) + (1-\lambda) \discnt (z_2 - \mu^1) + \discnt [\lambda (z_2 - \mu^1)] &\geq \lambda(\mu^1 - \mu^0) + \lambda (z_2 - \mu^1) \geq \lambda(z_2-\mu^0).
\end{align*}
\end{proof}

In particular, $1$-shot interventions become optimal when the discounting factor $\discnt$ is relatively large, or when $\lambda$ is relatively small. The first result intuitively arises because when $\discnt$ becomes large, the centralized designer cares about cost and wealth of the group at each time step; a $1$-shot intervention allows the designer to incur a single up-front cost for intervening (instead of inefficiently investing a smaller cost per round over more rounds, and losing some of this invested cost, up to $\Delta$, at each time step) while immediately reaching high wealth outcomes. On the other hand, no matter what $\discnt$ is, when $\lambda$ becomes small, the designer only cares about reaching high wealth as soon as possible, hence prefers faster interventions. We now provide sufficient conditions under which $1$-shot is not optimal:

\begin{theorem}
\label{thm:necessary}
If $\discnt < \lambda \left(1 - \frac{C}{z_2 - \mu^0}\right)$, the $C$-subsidy intervention has lower loss than the $1$-shot, $\left(z_2 - \mu^0\right)$-subsidy intervention.
\end{theorem}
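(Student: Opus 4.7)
The plan is to upper bound $L(C)$ via two geometric-series estimates and then compare with the $1$-shot loss $\lambda(z_2 - \mu^0)$. Write $W := z_2 - \mu^0$ for brevity. Since the $C$-subsidy is assumed to converge in finitely many rounds (otherwise its loss is infinite and the claim is vacuous), Claim~\ref{clm:min_intervention} forces $C > \Delta$. Writing out
\[
L(C) = \lambda \sum_{t=0}^{T(C)-1} \discnt^t C + (1-\lambda) \sum_{t=1}^{T(C)-1} \discnt^t (z_2 - \mu^t),
\]
I need a uniform bound on $z_2 - \mu^t$ and a closed-form bound on the discount sums.

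First I would establish that $\mu^0 < \mu^1 < \cdots < \mu^{T(C)-1}$. Inductively $\mu^t \geq z_1$ for all $t \leq T(C) - 1$: the base case holds since $\mu^0 \in [z_1, z_2]$, and if $\mu^t \geq z_1$ then monotonicity of $f$ gives $\mu^{t+1} = f(\mu^t + C) \geq f(z_1) = z_1$. For any $t \leq T(C) - 2$, the requirement $\mu^{t+1} < z_2$, combined with $f$ strictly increasing and $f(z_2) = z_2$, forces $\mu^t + C \leq z_2$, so $\mu^t + C \in [z_1, z_2]$. The definition of $\Delta$ then yields
\[
\mu^{t+1} = f(\mu^t + C) \geq (\mu^t + C) - \Delta > \mu^t,
\]
using $C > \Delta$. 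Hence $\mu^t \geq \mu^0$ and $z_2 - \mu^t \leq W$ for every $t \leq T(C) - 1$.

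Next I would plug in the standard geometric-series bounds $\sum_{t=0}^{T(C)-1} \discnt^t \leq \frac{1}{1-\discnt}$ and $\sum_{t=1}^{T(C)-1} \discnt^t \leq \frac{\discnt}{1-\discnt}$ to obtain
\[
L(C) \leq \frac{\lambda C + (1-\lambda)\,\discnt\, W}{1 - \discnt}.
\]
It then suffices to show the right-hand side is strictly less than $\lambda W$: clearing denominators, this reduces to $\lambda C + (1-\lambda)\discnt W < \lambda W (1 - \discnt)$, which after cancelling the $\lambda \discnt W$ terms collapses to $\lambda C < (\lambda - \discnt) W$, i.e.\ $\discnt < \lambda(1 - C/W)$. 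This is exactly the hypothesis, and since the $1$-shot $(z_2 - \mu^0)$-subsidy has loss exactly $\lambda W$, we obtain $L(C) < \lambda W$ as claimed.

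I expect the monotonicity step to be the only non-mechanical part: the cruder bound $z_2 - \mu^t \leq z_2$ would weaken the coefficient of $\discnt$ in the final inequality and miss the tight threshold. The definition of $\Delta$ combined with $C > \Delta$ is the key lever, guaranteeing that the population strictly advances at every round so that $z_2 - \mu^t \leq W$ uniformly. After that, the proof collapses to summing two geometric series and one line of algebra.
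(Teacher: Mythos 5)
Your proof is correct and follows essentially the same route as the paper's: bound $z_2-\mu^t\leq z_2-\mu^0$, extend both discounted sums to geometric series, and reduce the comparison with the $1$-shot loss $\lambda(z_2-\mu^0)$ to the stated inequality on $\discnt$. The only difference is that you carefully justify the monotonicity $\mu^t\geq\mu^0$ via $C>\Delta$ and the definition of $\Delta$, a step the paper simply asserts, so your write-up is if anything slightly more complete.
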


\begin{proof}
Consider any intervention with cost $C$ such that $\mu^k \geq z_2$, i.e. we reach $z_2$ after at most $k$ time steps. First, remember that the loss for this intervention is given by
\[
\lambda \sum_{t=0}^{k-1} \discnt^{t} C + (1 - \lambda) \sum_{t=1}^{k - 1} \discnt^{t} (z_2 - \mu^{t}).
\]
Noting that $\mu^t \geq \mu^0$ for all $t$, hence $z_2 - \mu^t \leq z_2 - \mu^0$, we can upper bound the loss by\juba{simplified}
\begin{align*}
\lambda \sum_{t=0}^{+\infty} \discnt^{t} C + (1 - \lambda) \sum_{t=1}^{+\infty} \discnt^{t} (z_2 - \mu^0)
&= \frac{\lambda  C}{1 - \discnt} + (1-\lambda) \frac{\discnt}{1 - \discnt} (z_2 - \mu^0) 
\\& \leq \frac{\lambda  C}{1 - \discnt} + (1-\lambda) \frac{\rho }{1 - \discnt} (z_2 - \mu^0).
\end{align*}
In turn, we have that a sufficient condition for $C$-subsidy to have a lower loss than one-shot is given by
\[ 
\frac{1}{1-\discnt} \left( \lambda C + (1-\lambda) \discnt (z_2 - \mu^0) \right) < \lambda (z_2 - \mu^0).
\]
This can be rewritten as 
\[
\lambda C + (z_2 - \mu^0) \discnt - \lambda (z_2 - \mu^0) \discnt  < \lambda (z_2 - \mu^0) - \lambda (z_2 - \mu^0) \discnt,
\]
i.e.
\[
 (z_2 - \mu^0) \discnt < \lambda (z_2 - \mu^0) - \lambda C, 
\]
which immediately leads to the theorem statement. 
\end{proof}

Theorem \ref{thm:necessary} gives conditions under which the minimal 1-shot intervention has higher cost than the $C$-subsidy intervention. But recall that we can take $C$ as small as $\Delta+\epsilon$ (for arbitrarily small $\epsilon$) and still get an intervention that reaches the region of attraction for the highest wealth fixed point. Thus we have the following corollary, which gives a necessary condition for the 1-shot intervention to be optimal:

\begin{corollary}\label{cor: mincost_best}
If $\discnt <  \lambda \left(1 - \frac{\Delta}{z_2 - \mu^0}\right)$, then the $(\Delta + \varepsilon)$-subsidy intervention has lower loss  than the $1$-shot, $(z_2 - \mu^0)$-subsidy intervention as $\varepsilon \to 0$.
\end{corollary}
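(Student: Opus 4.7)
The plan is to derive this corollary as an essentially immediate consequence of Theorem~\ref{thm:necessary}, after verifying that a $(\Delta+\varepsilon)$-subsidy is a well-defined intervention and that the sufficient condition of Theorem~\ref{thm:necessary} is preserved in the limit $\varepsilon \to 0$.

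First I would apply Claim~\ref{clm:min_intervention} to confirm that for any $\varepsilon > 0$, the $(\Delta+\varepsilon)$-subsidy is a valid intervention in the sense that there exists a finite time $T$ with $\mu^T \geq z_2$. This ensures that the loss expression used in Theorem~\ref{thm:necessary} is finite and the theorem is applicable with $C = \Delta+\varepsilon$.

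Next I would invoke Theorem~\ref{thm:necessary}: the sufficient condition for the $C$-subsidy intervention to have strictly lower loss than the $1$-shot $(z_2-\mu^0)$-subsidy intervention is
\[
\discnt < \lambda\left(1 - \frac{C}{z_2-\mu^0}\right).
\]
Substituting $C = \Delta + \varepsilon$, the condition becomes
\[
\discnt < \lambda\left(1 - \frac{\Delta + \varepsilon}{z_2-\mu^0}\right) = \lambda\left(1 - \frac{\Delta}{z_2-\mu^0}\right) - \frac{\lambda \varepsilon}{z_2-\mu^0}.
\]
By hypothesis, $\discnt < \lambda\bigl(1 - \frac{\Delta}{z_2-\mu^0}\bigr)$ holds strictly, so there is a positive gap between the two sides. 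By continuity of the right-hand side in $\varepsilon$, there exists $\varepsilon_0 > 0$ such that for all $0 < \varepsilon < \varepsilon_0$, the displayed inequality continues to hold.

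Finally I would conclude by applying Theorem~\ref{thm:necessary} for each such $\varepsilon < \varepsilon_0$: the $(\Delta+\varepsilon)$-subsidy strictly beats the $1$-shot intervention, establishing the corollary. There is no real obstacle here, as the argument is purely a limiting / continuity observation on top of the previous theorem; the only thing to be careful about is that the strict inequality in the hypothesis leaves slack that absorbs the $\frac{\lambda \varepsilon}{z_2-\mu^0}$ correction as $\varepsilon \to 0$.
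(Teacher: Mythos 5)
Your proposal is correct and matches the paper's (implicit) derivation: the paper likewise presents this as an immediate consequence of Theorem~\ref{thm:necessary} with $C = \Delta + \varepsilon$, relying on Claim~\ref{clm:min_intervention} to guarantee the subsidy reaches $z_2$ in finite time and on the strict inequality in the hypothesis to absorb the $\varepsilon$ correction. Your added care in making the $\varepsilon_0$ threshold explicit is a slight tightening of what the paper leaves to the reader, but it is not a different argument.
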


The above corollary provides the most stringent condition that we can derive from Theorem~\ref{thm:necessary} for $1$-shot not to be optimal. In particular, we note that the cheapest intervention we can use, the $(\Delta + \varepsilon)$-subsidy one, is better than the $1$-shot intervention so long as $\discnt <  \lambda \left(1 - \frac{\Delta}{z_2 - \mu^0}\right)$. We note that the combination of Theorem~\ref{thm: 1-shot_best} and Corollary~\ref{cor: mincost_best} show that when $\Delta$ becomes small and subsidy interventions are efficient, the condition that $\discnt \geq \lambda$ becomes nearly tight for optimality of the $1$-shot, $(z_2 - \mu^0)$-subsidy intervention. When $\Delta$ is large, there are still situations in which the condition of Corollary~\ref{cor: mincost_best} is essentially necessary and sufficient for the $(\Delta+\varepsilon)$-subsidy to be better than the 1-shot intervention, as evidenced by the example below:\juba{updated the discussion as well as the example below to focus on the fact that the condition of Corollary 1 is sometimes the right one/necessary and sufficient}.

\begin{example}
Suppose $f$ is continuous, but such that it is linear on interval $(a,b) \subset (z_1,z_2)$ with $f(x) = x - \Delta$ within said interval. We have immediately that $\mu^{t+1} = \mu^t + C - \Delta$ hence $\mu^t = \mu^0 + t (C - \Delta)$ so long as $t$ is such that $\mu^t \in (a,b)$. Here, the one-shot intervention still has loss $\lambda (z_2 - \mu^0)$. However, the $(\Delta + \epsilon)$-subsidy intervention reaches $z_2 - \varepsilon$, hence $z_2$, after no less than $T_\varepsilon = \frac{b - \mu^0}{\varepsilon} \to_{\varepsilon \to 0} +\infty$ time steps. In turn, it has loss at least
\begin{align*}
L(\Delta + \varepsilon) 
&\geq \lambda \sum_{t=0}^{T_\varepsilon - 1} \discnt^{t} (\Delta + \varepsilon) 
+ (1 - \lambda) \sum_{t=1}^{T_\varepsilon - 1} \discnt^{t} (z_2 - \mu^0 - t \varepsilon)
\\&\to_{\varepsilon \to 0}~~\frac{\lambda}{1 - \discnt} \Delta + \frac{\discnt(1 - \lambda)}{1 - \discnt} (z_2 - \mu^0).
\end{align*}
The proof of Theorem~\ref{thm:necessary} shows that the loss is also upper-bounded by
\[
L(\Delta + \varepsilon)  \leq \frac{\lambda}{1 - \discnt} \left(\Delta + \varepsilon\right) + \frac{\discnt(1 - \lambda)}{1 - \discnt} (z_2 - \mu^0).
\]
Hence, it must be that this bound is essentially tight, i.e. that
\[
L(\Delta + \varepsilon)  \to_{\varepsilon \to 0} \frac{\lambda}{1 - \discnt} \Delta + \frac{\discnt(1 - \lambda)}{1 - \discnt} (z_2 - \mu^0).
\]
In particular, in this case, the condition of Theorem~\ref{thm:necessary} and Corollary~\ref{cor: mincost_best} is not only sufficient but also necessary for the $\left(\Delta + \varepsilon\right)$-subsidy intervention to have better loss than the one-shot intervention.
\end{example}

\subsection*{Acknowledgements}
This work was supported in part by NSF grants AF-1763307 and FAI-2147212 and a grant from the Simons Foundation. 

\bibliographystyle{plainnat}
\bibliography{Arxiv/Arxiv_sourcefiles/refs.bib}

\begin{thebibliography}{13}
\providecommand{\natexlab}[1]{#1}
\providecommand{\url}[1]{\texttt{#1}}
\expandafter\ifx\csname urlstyle\endcsname\relax
  \providecommand{\doi}[1]{doi: #1}\else
  \providecommand{\doi}{doi: \begingroup \urlstyle{rm}\Url}\fi

\bibitem[Arrow(1973)]{arrow}
Kenneth Arrow.
\newblock The theory of discrimination.
\newblock \emph{Discrimination in labor markets}, 3\penalty0 (10):\penalty0
  3--33, 1973.

\bibitem[Arunachaleswaran et~al.(2020)Arunachaleswaran, Kannan, Roth, and
  Ziani]{pipelines-interventions}
Eshwar~Ram Arunachaleswaran, Sampath Kannan, Aaron Roth, and Juba Ziani.
\newblock Pipeline interventions.
\newblock \emph{arXiv preprint arXiv:2002.06592}, 2020.

\bibitem[Coate and Loury(1993)]{CL93}
Stephen Coate and Glenn~C Loury.
\newblock Will affirmative-action policies eliminate negative stereotypes?
\newblock \emph{The American Economic Review}, pages 1220--1240, 1993.

\bibitem[Foster and Vohra(1992)]{FV92}
Dean~P Foster and Rakesh~V Vohra.
\newblock An economic argument for affirmative action.
\newblock \emph{Rationality and Society}, 4\penalty0 (2):\penalty0 176--188,
  1992.

\bibitem[Heidari and Kleinberg(2021)]{heidari2021allocating}
Hoda Heidari and Jon Kleinberg.
\newblock Allocating opportunities in a dynamic model of intergenerational
  mobility.
\newblock In \emph{Proceedings of the 2021 ACM Conference on Fairness,
  Accountability, and Transparency}, pages 15--25, 2021.

\bibitem[Hu and Chen(2018)]{HC18}
Lily Hu and Yiling Chen.
\newblock A short-term intervention for long-term fairness in the labor market.
\newblock In Pierre{-}Antoine Champin, Fabien~L. Gandon, Mounia Lalmas, and
  Panagiotis~G. Ipeirotis, editors, \emph{Proceedings of the 2018 World Wide
  Web Conference on World Wide Web, {WWW}}, pages 1389--1398. {ACM}, 2018.

\bibitem[Jabbari et~al.(2017)Jabbari, Joseph, Kearns, Morgenstern, and
  Roth]{JJKMR17}
Shahin Jabbari, Matthew Joseph, Michael Kearns, Jamie Morgenstern, and Aaron
  Roth.
\newblock Fairness in reinforcement learning.
\newblock In \emph{International Conference on Machine Learning}, pages
  1617--1626, 2017.

\bibitem[Jung et~al.(2020)Jung, Kannan, Lee, Pai, Roth, and
  Vohra]{jung2020fair}
Christopher Jung, Sampath Kannan, Changhwa Lee, Mallesh Pai, Aaron Roth, and
  Rakesh Vohra.
\newblock Fair prediction with endogenous behavior.
\newblock In \emph{Proceedings of the 21st ACM Conference on Economics and
  Computation}, pages 677--678, 2020.

\bibitem[Kannan et~al.(2019)Kannan, Roth, and Ziani]{kannan2019downstream}
Sampath Kannan, Aaron Roth, and Juba Ziani.
\newblock Downstream effects of affirmative action.
\newblock In \emph{Proceedings of the Conference on Fairness, Accountability,
  and Transparency}, pages 240--248, 2019.

\bibitem[Liu et~al.(2018)Liu, Dean, Rolf, Simchowitz, and Hardt]{delayed}
Lydia~T Liu, Sarah Dean, Esther Rolf, Max Simchowitz, and Moritz Hardt.
\newblock Delayed impact of fair machine learning.
\newblock In \emph{International Conference on Machine Learning}, 2018.

\bibitem[Liu et~al.(2020)Liu, Wilson, Haghtalab, Kalai, Borgs, and
  Chayes]{liu2020disparate}
Lydia~T Liu, Ashia Wilson, Nika Haghtalab, Adam~Tauman Kalai, Christian Borgs,
  and Jennifer Chayes.
\newblock The disparate equilibria of algorithmic decision making when
  individuals invest rationally.
\newblock In \emph{Proceedings of the 2020 Conference on Fairness,
  Accountability, and Transparency}, pages 381--391, 2020.

\bibitem[Mouzannar et~al.(2019)Mouzannar, Ohannessian, and
  Srebro]{mouzannar2019fair}
Hussein Mouzannar, Mesrob~I Ohannessian, and Nathan Srebro.
\newblock From fair decision making to social equality.
\newblock In \emph{Proceedings of the Conference on Fairness, Accountability,
  and Transparency}, pages 359--368, 2019.

\bibitem[Phelps(1972)]{phelps}
Edmund~S Phelps.
\newblock The statistical theory of racism and sexism.
\newblock \emph{The american economic review}, pages 659--661, 1972.

\end{thebibliography}

\appendix

\section{Omitted Proofs for Section~\ref{sec:wealth_dynamics}: Wealth Dynamics}\label{app:wealth_dynamics}
\subsection{Proof of Claim \ref{clm:bivar}}\label{app:bivar}

Because $S$ is a convex combination of $W$ and $T$, both $(T,S)$ and $(W,S)$ are multivariate Gaussians. The covariances are given by
\[
Cov(T,S) = \beta Cov(T,T) + (1 - \beta) Cov(T,W) = \beta \stdt^2,
\]
\begin{align*}
Cov(W,S) 
&= \beta Cov(W,T) + (1 - \beta) Cov(W,W) = (1 - \beta) \stdw^2
\end{align*}
and 
\begin{align*}
Cov(S,S) 
&= \beta^2 Cov (T,T) + (1-\beta)^2 Cov(W,W) + 2 \beta(1-\beta) Cov(T,W) 
= \beta^2 \stdt^2 + (1-\beta)^2 \stdw^2. 
\end{align*}

\subsection{Proof of Lemma~\ref{lem:update_rule}}\label{app: proof_update_rule}

Using Claim~\ref{clm:bivar}, we have that
\begin{align*}
     E[T|S=s]
     &=  \frac{Cov(T,S)}{Var(S)}(s -  (1-\beta)\mu)
     =  \frac{\beta \stdt^2}{\beta^2\stdt^2 + (1-\beta)^2\stdw^2} (s - (1-\beta)\mu),
\end{align*}
and 
\begin{align*}
     E[W|S=s] 
     &=  \mu + \frac{Cov(W,S)}{Var(S)}(s - (1-\beta)\mu)
    = \mu + \frac{(1-\beta)\stdw^2}{\beta^2\stdt^2 + (1-\beta)^2\stdw^2} (s -  (1-\beta)\mu).
\end{align*}
Therefore, the university admits a student with score $s$ if and only if 
\begin{align*}
\left(\frac{\alpha \beta \stdt^2 + (1 - \alpha) (1 - \beta) \stdw^2}{\beta^2\stdt^2 + (1-\beta)^2\stdw^2}  \right)(s - (1-\beta) \mu) \geq \thr - (1 - \alpha) \mu,
\end{align*}
which can be rewritten as
\begin{align*}
&\frac{s - (1-\beta) \mu}{\sqrt{\beta^2 \stdt^2 + (1 - \beta)^2 \stdw^2}} \geq \frac{\sqrt{\beta^2 \stdt^2 + (1 - \beta)^2 \stdw^2}}{\alpha \beta \stdt^2 + (1 - \alpha) (1 - \beta) \stdw^2} \cdot \left(\thr - (1 - \alpha) \mu\right).
\end{align*}
Noting that by Claim~\ref{clm:bivar}, $\frac{S - (1-\beta) \mu}{\sqrt{\beta^2 \stdt^2 + (1 - \beta)^2 \stdw^2}}$ follows a normal distribution with mean $0$ and variance $1$; the expression for $\mu^{t+1}$ (hence the update rule) is then given by
\begin{align*}
    &1 - \Phi\left(\frac{\sqrt{\beta^2 \stdt^2 + (1 - \beta)^2 \stdw^2}}{\alpha \beta \stdt^2 + (1 - \alpha) (1 - \beta) \stdw^2} \cdot \left(\thr - (1 - \alpha) \mu\right)\right)
    = 1 - \Phi\left(K\left(\alpha,\beta,\stdt,\stdw\right) \cdot \left(\thr - (1 - \alpha) \mu\right)\right)
\end{align*}
This concludes the proof.

\subsection{Proof of Claim~\ref{clm: f_shape}}\label{app: f_shape}

For simplicity of notations, let us write $K$ instead of $K(\alpha,\beta,\stdt,\stdw)$.  Continuity is immediate from $f$ being the composition of a linear (hence continuous) function and the continuous function $\Phi$. Now, we have
\[
f'(x) = \frac{K (1-\alpha)}{\sqrt{2 \pi}} \exp \left(- K^2 (\thr - (1-\alpha) x)^2/2\right) \geq 0,
\]
showing $f$ is increasing. Finally, the second order derivative of the update rule $f''(x)$ is given by

\[
\frac{K^3 (1-\alpha)^2 (\thr - (1-\alpha) x)}{\sqrt{2 \pi}} \cdot e^{- K^2 (\thr - (1-\alpha) x)^2/2}.
\]
The result immediately follows, as $f''(x) \geq 0$ if and only if $x \leq \frac{\thr}{1-\alpha}$.  

\subsection{Proof of Lemma~\ref{lem:fixed_pts}}\label{app:fixed_pts}

Let us write $g(x) = f(x) - x$. Note that $f(x)$ has a fixed point if and only if $g(x) = 0$.
\begin{enumerate}
    \item $\thr \geq 1 - \alpha$ and $f$ is convex on $[0,1]$. Then $g'(x) = f'(x) - 1$, $g''(x) = f''(x)$, and $g$ is also convex. Further, note that $g(0) = f(0) - 0 > 0$ and $g(1) = f(1) - 1 < 0$. Therefore, $g(x) = 0$ can only have one solution at most. Indeed, let $x^*$ be the smallest value in $[0,1]$ for which $g(x^*) = 0$; we have that for all $x \in (x^*,1]$, we can write $x^* = \lambda x + (1-\lambda) 1$ for some $\lambda \in (0,1]$, Then, we have $g(x) \leq \lambda g(x^*) + (1 - \lambda) g(1) < 0$ by convexity. 
    \item $\thr \leq 0$ and $f$ is concave on $[0,1]$. Then $f$ can have at most $1$ fixed point by the same argument as above. 
    
    \item 
    Otherwise, note that $g'(x) = f'(x) - 1$ is first increasing up until $x^* = \tau/(1-\alpha)$ then decreasing in $x$. Therefore, $g'$ has at most two zeros $x^-$ and $x^+$. If $g'$ has two zeros, they must satisfy $x^- < x^*$ and $x^+ > x^*$, and that $g'(x) < 0$ for $x < x^-$, $g'(x) \geq 0$ for $x \in [x^-,x^+]$, and $g'(x) < 0$ for $x > x_+$. $g$ then has at most three intersection with $0$, with the first intersection on $[0,x^-]$, the second on $[x^-,x^+]$, and the third on $[x^+,1]$. When $g'$ has at most one zero, $g$ can only have at most $2$ zeros
    
    
\end{enumerate}
This concludes the proof.

\section{Omitted Proofs for Section~\ref{sec:interventions}: Interventions for Long-term Fairness}\label{app:interventions}

\subsection{Proof of Theorem~\ref{thm:threshold}}\label{app:threshold}

This follows from the fact that $f(x,\thr)$ is decreasing in $\thr$ for all $x \in [0,1]$. First, this implies that $f(x,\thr') > f(x,\thr) \geq x$ for all $x \in [0,z_1(\thr)]$. Hence $z_1(\thr') > z_1(\thr)$. For the third fixed point, note that $f(z_3(\thr),\thr') > f(z_3(\thr),\thr) = z_3(\thr) $; because $f$ is continuous and $f(1) < 1$, this immediately implies that $f$ has a fixed point on $(z_3(\thr),1]$, hence $z_3(\thr') > z_3(\thr)$.

Finally, let us consider the case of the second fixed point. First, we note that it must be that $z_1(\thr') < z_2(\thr)$. Suppose this is not the case, it must be that $f(x,\thr') > x$ for all $x < z_2(\thr)$. Further, for all $x \in [z_2(\thr),z_3(\thr)]$, we must have $f(x,\thr') > f(x,\thr) \geq x$, hence $f(x,\thr') > x$ for all $x < z_3(\thr)$. This implies $z_1(\thr') > z_3(\thr)$. However, we must have $z_3(\thr) \geq \thr/(1-\alpha)$ while $z_1(\thr') \leq \thr'/(1-\alpha) < \thr/(1-\alpha)$, which is a contradiction. \\ Now that we have $z_1(\thr') < z_2(\thr)$, note that it must be that $f(x,\thr') < x$ on a small neighborhood $(z_1(\thr'),z_1(\thr')+ \varepsilon)$ by our characterization of the fixed points of $f$. Since $f$ is continuous and $f(z_2(\thr),\thr') > f(z_2(\thr),\thr) = z_2(\thr)$, there exists a fixed point on $(z_1(\thr'),z_2(\thr)$. Since $z_3(\thr') > z_3(\thr) > z_2(\thr)$, this must be the second fixed point $z_2(\thr')$. 

\subsection{Proof of Theorem~\ref{thm:beta}}\label{app:beta}

\juba{made significant changes, check}
The partial derivative of $f$ with respect to $\beta$ is given by
\begin{align*}
    &\pdv{f}{\beta}(x,\beta) = 
    [\thr-(1-\alpha)x]~\cdot~\phi\left(K(\alpha,\beta,\stdt,\stdw) (\tau - (1-\alpha) x)\right) 
    \\&\times \frac{(\alpha - \beta) \stdt^2 \stdw^2}{\sqrt{\beta^2 \stdt^2 + (1-\beta)^2 \stdw^2} (\alpha \beta \stdt^2 + (1-\alpha) (1 - \beta) \stdw^2)^2} 
\end{align*}
where $\phi$ is the probability density function of a standard Gaussian. Note that for $\alpha < \beta$, $\pdv{f}{\beta}(x,\beta) < 0$ when $x < \tau/(1-\alpha)$ and $\pdv{f}{\beta}(x,\beta) > 0$ when $x > \tau/(1-\alpha)$. In particular, $f(x,\beta') > f(x,\beta) \geq x$ for all $x \leq z_1(\beta) (< \tau/(1-\alpha))$, hence $f(.\beta')$ has no fixed point on $[0,z_1(\beta)]$. This means that $z_1(\beta') > z_1(\beta)$. Similarly, $f(x,\beta') < f(x,\beta) \leq x$ for all $x \geq z_3(\beta) (> \thr/(1-\alpha))$, hence $f$ has no fixed point on $[z_3(\beta),1]$ and $z_3(\beta') < z_3(\beta)$. A similar proof follows for $\alpha \geq \beta' > \beta$.

\end{document}